\numberwithin{equation}{section}
\theoremstyle{plain}
\newtheorem{theorem}{Theorem}[section]
\newtheorem{lemma}[theorem]{Lemma}
\newtheorem{proposition}[theorem]{Proposition}
\theoremstyle{definition}
\newtheorem{definition}[theorem]{Definition}
\newtheorem{notation}[theorem]{Notation}
\newtheorem{remark}[theorem]{Remark}
\newcommand\+[1]{\mathcal{#1}}
\newcommand\GS[2] { {\gamma}_{#1\rightarrow #2}}
\newcommand{\Z}{{\mathbb Z}}
\newcommand{\bz}{{\bf 0}}
\def\set#1{\{#1\}}
\def\pair#1{\langle{#1}\rangle}
\def\bz{{\bf 0}}
\def\S{\Sigma}
\def\restrict#1{\raise-.5ex\hbox{\ensuremath|}_{#1}}
\def\ct{{\mathcal T}}
\author{Andr\'e Arnold\affiliationmark{1} 
\and Patrick C\'egielski\affiliationmark{2} 
\and Serge Grigorieff\affiliationmark{3}
\and Ir\`ene  Guessarian\affiliationmark{3,4}}
\title{The algebra of binary trees is affine complete}
\affiliation{Universit\'e  de Bordeaux, France.\\
LACL, 
Universit\'e  Paris XII -- IUT de S\'enart-Fontainebleau, France.\\ 
IRIF,  CNRS \& Universit\'e Paris-Diderot, France. \\
Emerita Sorbonne Universit\'e, Paris}
\keywords{algebras, trees, congruences}
\begin{document}
\publicationdetails{23}{2021}{2}{1}{6890}
\maketitle

\begin{abstract}
A function on an algebra  is congruence preserving if, for any
congruence, it maps pairs of congruent elements onto pairs of
congruent elements. We show that on the algebra of  binary trees  
whose leaves are labeled by letters of an alphabet containing at
least three  letters, a function  is  congruence preserving if and only
if it is a polynomial function, thus exhibiting the first example of
a non commutative and non associative affine complete algebra.
\end{abstract}
\begin{center}
{\Large\it Merci \`a Maurice pour nombre de discussions alg\'ebriques passionnantes}
\end{center}
\section{Introduction}
A function on an algebra  is congruence preserving if, for any
congruence, it maps pairs of congruent elements onto pairs of
congruent elements.

A polynomial function on an algebra is a function defined by a term
of the algebra using variables, constants and the operations of the
algebra. Obviously, every polynomial function is  congruence
preserving.  

Algebras where all congruence preserving functions are polynomial
functions are called {\em affine complete} in the terminology
introduced by \cite{werner1971}. They are extensively studied in
the book by \cite{KaarliPixley}.

In the commutative case,  many  algebras have been shown to be
affine complete: Boolean algebras \citep{gratzer1962}, $p$-rings
with unit \citep{iskander1972}. For distributive lattices, 
\cite{plos} described congruence preserving functions, and
\cite{gratzer1964} determined which distributive lattices are affine
complete. Affine completenes is an intrinsic  property of an
algebra, which fails to hold even for very simple algebras: e.g., in
$\+A = \langle \Z, + \rangle$, the function
$f \colon \Z \to \Z$ defined by
\[
f(x) 
= \texttt{if $x\geq0$ then $\dfrac{\Gamma(1/2)}{2\times4^x\times x!}
\int_1^\infty e^{-t/2}(t^2-1)^x dt$
else $-f(-x)$.}
\]
has been proved to be congruence preserving \citep{cgg15}, but it
is {\em not a polynomial function} because its power series is
infinite. Hence $\+A = \langle \Z, + \rangle$ is not affine complete.

In the non commutative case, very little is known about affine
complete algebras. We proved in \cite{acgg20} that the free
monoid $\Sigma^*$ is an associative non commutative affine
complete algebra if $\Sigma$ has at least three letters, and we
proved in \cite{acgg20} a partial result concerning a non
commutative and non associative algebra: every {\em unary
congruence preserving} function $f \colon T(\Sigma) \to T(\Sigma)$
is a polynomial function, where $T(\Sigma)$ is the algebra of
{\em full} binary trees with leaves labelled by letters of an alphabet
$\Sigma$ having at least three letters. We here generalize this
result proving that a congruence preserving function 
$f \colon \+T(\Sigma)^n \to \+T(\Sigma)$ of any arity $n$ is a
polynomial function, where $\+T(\Sigma)$ is the algebra of arbitrary
(possibly non full) binary trees with labelled leaves. This
generalization is twofold: (1) non full binary trees are allowed in
$\+T(\Sigma)$, and (2) congruence preserving functions of arbitrary
arity are allowed. This exhibits an example of a non commutative
and non associative affine complete algebra. Non commutative and
non associative algebras are of constant use in Computer Science,
and congruences are also very often used, whence the potential
usefulness of our result.

We first define binary trees and their congruences, we then study
conditions which will enable us to prove that every congruence
preserving function is a polynomial  function, and to finally prove the
affine completeness of $T(\Sigma)$.
\section{The algebra of  binary trees}

\subsection{Trees, congruences}
For an algebra $\+A$  with domain $ A$, a {\em congruence}
$\sim$ on  $\+ A$ is an equivalence relation on $A$ which is
compatible with the operations of $\+A$. We state the
characterization of congruences by kernels of homomorphisms.
 
\begin{lemma}\label{l:ker} 
Let $\+A = \langle A\,,\, \star \rangle$ be an algebra with a binary
operation $\star$. An equivalence $\sim$ on $A$ is a congruence
iff there exists an algebra $\+B = \langle B\,,\, * \rangle$ with a
binary operation $*$ and there exists $\theta \colon A \to B$ a
homomorphism such that $\sim$ coincides with the kernel
congruence $\ker(\theta)$ of $\theta$, defined by
$x \sim_\theta y$ iff $\theta(x) = \theta(y)$.
\end{lemma}

Let $\Sigma$ be an alphabet not containing $\{0,1\}$. We shall
represent the algebra of binary trees over $\S$, i.e., trees with
leaves labeled by letters of $\Sigma$, as a set of words
$\+T(\Sigma)$ on the alphabet $\S \cup \{0,1\}$, together with the
binary product operation $\star$.
 
\begin{definition}\label{l:BTLL}
The algebra $\+ {B} = \langle \+T(\Sigma), \star \rangle$ of binary
trees over $\Sigma$ is defined as follows.
\begin{itemize}
\item A binary tree over $\Sigma$ is a finite set of words
$t\subseteq \set{0,1}^*\Sigma$ such that: For any $ua, vb \in t$, if
$ua \neq vb$ then $u$ is not a prefix of $v$ and $v$ is not a prefix
of $u$. The carrier set $\+T(\Sigma)$ is the set of all binary trees.
The empty set $\emptyset$ is a binary tree denoted by $\bz$.
\item The binary product operation $\star$ is defined by: for
$t,\; t' \in \+T(\Sigma)$, $t \star t' = 0.t \cup 1.t'$. In particular,
$\bz \star \bz = \bz$.
\end{itemize}
\end{definition}

\begin{figure}
\centering

\newcommand{\lb}{\begin{picture}(10,20)(0,0) \put(10,20){\circle*{2}}
     \put(0,0){\line(1,2){10}}\end{picture}}
 
\newcommand{\rb}{\begin{picture}(10,20)(0,0)
     \put(0,20){\circle*{2}}
     \put(10,0){\line(-1,2){10}}\end{picture}}
 
  \setlength{\unitlength}{1.6\unitlength}
  
\begin{picture}(250,50)
  \put(0,10){\lb}\put(10,30){\lb}\put(20,30){\rb}
  \put(0,5){$b$}\put(30,25){$a$}
  
  \put(50,20){\lb}\put(60,20){\rb}
  \put(50,15){$c$}\put(70,15){$d$}

  \put(90,10){\lb}\put(100,30){\lb}\put(110,10){\lb}\put(110,30){\rb}\put(120,10){\rb}
  \put(90,5){$b$}\put(110,5){$c$}\put(130,5){$d$}
  
  \put(150,10){\lb}\put(160,30){\lb}\put(160,10){\rb}\put(170,30){\rb}\put(180,10){\rb}
  \put(150,5){$b$}\put(170,5){$c$}\put(190,5){$d$}
  
  \put(210,10){\lb}\put(220,30){\lb}\put(230,10){\lb}\put(230,30){\rb}\put(240,10){\rb}
  \put(210,5){$a$}\put(230,5){$b$}\put(250,5){$c$}
\end{picture}

\caption{
From left to right, $t = \{00b,1a\}$, $\tau = \{0c,1d\}$,
$t_1 = {\GS{a}{\tau}}(t) = \{00b,01c,11d\}$, $t_2 = \{00b,01c,11d\}$,
$t_3 = \{00a,10b,11c\}$.
Trees $t_1, \ t_2, \ t_3$ have the same size 6, trees $t_1$ and
$t_3$  are similar (have the same skeleton.)
}
\label{fig:tree}
\end{figure}
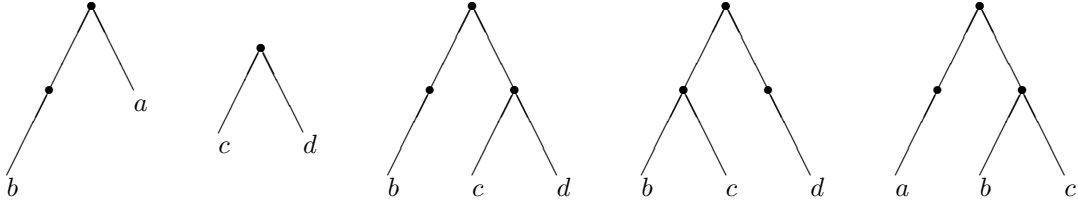

When the alphabet $\S$ is clear, we will denote by $\+T$ the set of
all binary trees. Trees are generated by $\set{\bz} \cup \Sigma$ and
the operation $\star$.

An essential property of this algebra $\+B$ is that its elements are
uniquely decomposable.

\begin{lemma}[Unicity of  decomposition]\label{uniquedecompo}
If $t$ is a tree not in $\set{\bz} \cup \Sigma$ then there exists a
unique ordered pair $\pair{t_1, t_2} \neq \pair{\bz, \bz}$ in $\+T^2$
such that $t = t_1 \star t_2$.
\end{lemma}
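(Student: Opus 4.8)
The plan is to argue directly from the combinatorial description of trees. Recall that a tree is a finite set of words from $\set{0,1}^*\Sigma$ whose $\set{0,1}$-addresses are pairwise incomparable for the prefix order, and that $t_1\star t_2$ is obtained by prepending the single letter $0$ to every word of $t_1$ and the single letter $1$ to every word of $t_2$. So inverting $\star$ should amount to ``peeling off the leading letter'', once we know that every word of $t$ genuinely starts with a letter in $\set{0,1}$.

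First I would verify that if $t\notin\set\bz\cup\Sigma$ then $t\neq\emptyset$ and no element of $t$ is a single letter $a\in\Sigma$ (an element with empty $\set{0,1}$-address). Indeed, such an element $a$ would be a prefix of every other word of $t$, so by the incomparability condition it would have to be the unique element of $t$, forcing $t=\set a\in\Sigma$ and contradicting the hypothesis. Consequently every word of $t$ begins with $0$ or with $1$, and $t$ is the disjoint union of $t_{(0)}=\set{v\in t\mid v\text{ begins with }0}$ and $t_{(1)}=\set{v\in t\mid v\text{ begins with }1}$.

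For existence, put $t_1=\set{w\mid 0w\in t}$ and $t_2=\set{w\mid 1w\in t}$. These are finite subsets of $\set{0,1}^*\Sigma$ and they inherit prefix-incomparability from $t$, so $t_1,t_2\in\+T$; moreover $0.t_1=t_{(0)}$ and $1.t_2=t_{(1)}$, hence $t_1\star t_2=0.t_1\cup 1.t_2=t$, and $\pair{t_1,t_2}\neq\pair{\bz,\bz}$ since $t\neq\bz$. For uniqueness, suppose also $t=s_1\star s_2=0.s_1\cup 1.s_2$. Every word of $0.s_1$ begins with $0$ and every word of $1.s_2$ begins with $1$, so necessarily $0.s_1=t_{(0)}=0.t_1$ and $1.s_2=t_{(1)}=1.t_2$; as prepending a fixed letter is an injective operation on words, this yields $s_1=t_1$ and $s_2=t_2$.

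The statement is really bookkeeping around the definitions, so there is no serious obstacle; the one point that deserves attention is the first step, where one sees that the hypothesis $t\notin\set\bz\cup\Sigma$ is exactly what excludes a word with empty address and thereby makes ``partition by leading letter'' an honest splitting of $t$ into two subtrees. Dropping the hypothesis fails precisely here: $\bz=\bz\star\bz$, and $\set a$ (for $a\in\Sigma$) admits no decomposition at all, since $0.t_1\cup 1.t_2=\set a$ would force $t_1=t_2=\bz$ and hence $\bz=\set a$.
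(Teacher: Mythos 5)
Your proof is correct. The paper in fact states Lemma~\ref{uniquedecompo} without any proof, treating it as immediate from Definition~\ref{l:BTLL}; your argument --- split the words of $t$ by their leading letter, which the hypothesis $t\notin\set{\bz}\cup\Sigma$ guarantees is always $0$ or $1$, then use injectivity of prepending a fixed letter for uniqueness --- is exactly the routine verification that backs the claim up, and you correctly isolate why both excluded cases ($\bz=\bz\star\bz$ on one side, singletons admitting no decomposition on the other) make the hypothesis necessary.
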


This property allows us to associate with each $t\in \+T$ its
{\em size} $|t|$ (number of nodes)

-- $|\bz| = 0 $, and for all $a \in \Sigma$, $|a| = 1$,

-- if $t \notin \set{\bz} \cup \Sigma$ then $t = t_1 \star t_2$, and
$|t| = |t_1| + |t_2| + 1$.

\medskip

If $|t| > 1$ then there exist $t_1, t_2 $ with $|t_i| < |t|$ such that
$t = t_1\star t_2$. Trees $t \star\ t'$, $\bz \star t'$, $t \star \bz$ are
trees whose root has two sons, a single right son, a single left son,
respectively. See Figure~\ref{fig:tree}. 

\subsection{Homomorphisms, graftings}

\begin{lemma}\label{l:ref1}
Let $\+B = \langle B\,,\, * \rangle$ be an algebra with a binary
operation $*$. Every mapping $h \colon \S \to B$ can be uniquely
extended to a homomorphism $h \colon \+T \to B$. 
\end{lemma}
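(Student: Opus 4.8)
The plan is to \emph{construct} the extension by recursion on the size $|t|$ of a tree and to establish its \emph{uniqueness} by a parallel induction, the unicity of decomposition (Lemma~\ref{uniquedecompo}) being exactly what makes the recursion well founded. First I would fix the values on the generators $\set{\bz}\cup\Sigma$: on a letter $a\in\Sigma$ the value $h(a)$ is prescribed, while on $\bz$ the identity $\bz=\bz\star\bz$ forces $h(\bz)=h(\bz)*h(\bz)$, so $h(\bz)$ must be an idempotent of $\+B$ (equivalently, $\+B$ carries a distinguished idempotent serving as the image of $\bz$). For a tree $t$ with $|t|>1$ — which forces $t\notin\set{\bz}\cup\Sigma$ — Lemma~\ref{uniquedecompo} supplies the \emph{unique} pair $\pair{t_1,t_2}\neq\pair{\bz,\bz}$ with $t=t_1\star t_2$, and since $|t_1|,|t_2|<|t|$ one sets $h(t):=h(t_1)*h(t_2)$. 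Because this clause involves no choice and strictly decreases size, it determines $h$ on all of $\+T$, with no freedom remaining once the base values are fixed.

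Next I would check that the map so obtained is a homomorphism, that is, that $h(t\star t')=h(t)*h(t')$ for all $t,t'\in\+T$. When $\pair{t,t'}\neq\pair{\bz,\bz}$ we have $t\star t'=0.t\cup 1.t'\notin\set{\bz}\cup\Sigma$ and $|t\star t'|>1$, and its unique decomposition is precisely $\pair{t,t'}$, so the defining clause yields the identity at once; the single remaining case $\pair{t,t'}=\pair{\bz,\bz}$ is the idempotency of $h(\bz)$ already arranged. For uniqueness, given another homomorphism $g\colon\+T\to\+B$ with the same values on $\set{\bz}\cup\Sigma$, I would show $g=h$ by induction on $|t|$: the case $|t|\le 1$ is agreement on the generators, and for $|t|>1$, writing $t=t_1\star t_2$ via Lemma~\ref{uniquedecompo}, one gets $g(t)=g(t_1)*g(t_2)=h(t_1)*h(t_2)=h(t)$.

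There is no real obstacle here beyond bookkeeping; conceptually this is the universal property of the term algebra freely generated by $\set{\bz}\cup\Sigma$ under $\star$, subject to the single relation $\bz\star\bz=\bz$. The one point that genuinely needs attention is the special status of $\bz$, the unique tree whose only decomposition is the trivial $\pair{\bz,\bz}$ and which must therefore be treated as a generator in its own right with idempotent image; once that is granted, all the substantive content is already packaged in Lemma~\ref{uniquedecompo} (unique readability), and the proof is the standard structural induction.
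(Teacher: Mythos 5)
Your proof is correct, and it is the standard structural induction one would expect; the paper itself states Lemma~\ref{l:ref1} without proof, so there is no argument of the authors' to compare against beyond the implicit appeal to the universal property of a freely generated algebra. Your recursion on $|t|$, grounded in Lemma~\ref{uniquedecompo} for well-definedness and for the uniqueness step, is exactly the right mechanism. The one substantive point you raise --- and it is worth raising --- is the status of $\bz$: since $\bz=\bz\star\bz$ but $\bz$ is not generated from $\Sigma$ by $\star$, a homomorphism into an arbitrary $\+B=\langle B, *\rangle$ requires choosing an idempotent of $B$ as the image of $\bz$, so the lemma as literally stated (existence and uniqueness of the extension of a map defined only on $\Sigma$) is imprecise for general $\+B$: existence fails if $B$ has no idempotent, and uniqueness fails if it has several. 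Your reformulation, treating $\set{\bz}\cup\Sigma$ as the generating set and $h(\bz)$ as prescribed data subject to $h(\bz)=h(\bz)*h(\bz)$, is the correct repair. This gap is harmless for every use the paper makes of the lemma: for endomorphisms of $\+T$ the subsequent Remark shows $\bz$ is the \emph{unique} idempotent, so $h(\bz)=\bz$ is forced, and in Lemma~\ref{l:ker} the target is a quotient of $\+T$, where the class of $\bz$ serves as the distinguished idempotent. So your write-up is not merely a filled-in proof but a mild and legitimate sharpening of the statement.
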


\begin{remark} 1) Because of the universal property of Lemma
\ref{l:ref1}, homomorphisms are (uniquely) defined by giving their
values on $\Sigma$.

2) For every endomorphism, $h(\bz) = \bz$. Otherwise, as
$\bz = \bz \star \bz$, $h(\bz) = h(\bz) \star h(\bz)$; if $h(\bz) = t$ with
$|t| \geq 1$ then $t = t \star t$ implies $|t| = 2|t| + 1$, a
contradiction.
\end{remark}

\begin{definition}
For a given $a \in \Sigma$, let $\nu_a$ be the endomorphism
sending $\Sigma$ onto $a$. If for some $a \in \Sigma$,
$\nu_a(t) = \nu_a(t')$, trees $t$ and $t'$ are said to be {\em similar},
which is denoted by $t \sim_s t'$.
\end{definition}

Note that the congruence $\sim_s$ does not depend on the choice
of the letter $a \in \Sigma$ since $\nu_b(t) = \nu_b(\nu_a(t))$. From
an intuitive viewpoint, $t \sim_s t'$ means that $t$ and $t'$ have the
same skeleton, i.e., they are identical except for the leaf labels.
See Figure~\ref{fig:tree}.

Other congruences fundamental for our proof are the kernels of the
grafting endomorphisms, defined below. 

\begin{definition}[Grafting]\label{grafting} 
Let $a \in \Sigma$ and $\tau \in \+T$. Then the grafting
${\GS{a}{\tau}} \colon \+T \to \+T$ is the endomorphism defined by
its restriction on $\Sigma$
\[
{\GS{a}{\tau}}(b) = \begin{cases} \tau & \text{ if  }b = a,\\
                                                       b &\text{ if  }b \neq a.
\end{cases}
\]
\end{definition}

In other words, for any $a \in \Sigma$ and any $\tau \in \+T$,
$\GS{a}{\tau}$ is the endomorphism sending the letter $a$ on
$\tau$ and each other letter on itself.

An endomorphism $h$ of $\langle \+T(\Sigma), \star \rangle$ is
{\em idempotent} if for every $t \in \+T$, $h(h(t)) = h(t)$. By Lemma
\ref{l:ref1}, $h$ is idempotent iff for every $a \in \S$, $h(h(a)) = h(a)$.
For instance if $a$ does not occur in $\tau$ then $\GS{a}{\tau}$ is
idempotent.

\begin{proposition}\label{p:G2}
Let $\tau \in \+T$, let $t,\ t' \in \+T$, and let $a_1 \neq a_2$ be two
letters in $\Sigma$. If $ {\GS{a_i}{\tau}}(t) = {\GS{a_i}{\tau}}(t')$ for
$i = 1, 2$, then $t = t'$.
\end{proposition}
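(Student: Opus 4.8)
The plan is to argue by induction on the common skeleton of $t$ and $t'$. First I would observe that the hypothesis $\GS{a_i}{\tau}(t)=\GS{a_i}{\tau}(t')$ for $i=1,2$ forces $t\sim_s t'$: indeed, applying a further homomorphism collapsing everything to a single letter turns $\GS{a_i}{\tau}$ into $\nu_c$ for an appropriate $c$, up to the skeleton contributed by $\tau$; more directly, $\nu_a\circ\GS{a}{\tau}$ and the skeletons match, so $t$ and $t'$ have the same shape. Having fixed that $t$ and $t'$ are similar, I would proceed by induction on $|t|=|t'|$.

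For the base case, if $t\in\{\bz\}\cup\Sigma$ then $t'$ is of the same form (same skeleton). If $t=\bz$ then $t'=\bz$. If $t=b$ and $t'=b'$ are single letters, then $\GS{a_i}{\tau}(b)=\GS{a_i}{\tau}(b')$; the key point is that with two distinct letters $a_1,a_2$ one can distinguish every letter: if $b\neq b'$, pick $a_i\notin\{b,b'\}$ when possible, giving $b=\GS{a_i}{\tau}(b)=\GS{a_i}{\tau}(b')=b'$, a contradiction; the remaining subcase where $\{b,b'\}\supseteq\{a_1,a_2\}$ — which can only happen if $\{b,b'\}=\{a_1,a_2\}$ — is handled by noting $\GS{a_1}{\tau}(a_1)=\tau$ while $\GS{a_1}{\tau}(a_2)=a_2$, and these are equal only if $\tau=a_2$, in which case one then uses the $i=2$ equation to get $\tau=a_1$, forcing $a_1=a_2$, contradiction. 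So $b=b'$.

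For the inductive step, suppose $|t|>1$. By Lemma~\ref{uniquedecompo} write $t=t_1\star t_2$ and, since $t\sim_s t'$, also $t'=t'_1\star t'_2$ with $t_j\sim_s t'_j$. Since $\GS{a_i}{\tau}$ is a homomorphism, $\GS{a_i}{\tau}(t)=\GS{a_i}{\tau}(t_1)\star\GS{a_i}{\tau}(t_2)$ and likewise for $t'$; by unicity of decomposition (Lemma~\ref{uniquedecompo}) applied in $\+T$ — here one checks the pair is not $\pair{\bz,\bz}$, which holds because $|t|>1$ — we get $\GS{a_i}{\tau}(t_j)=\GS{a_i}{\tau}(t'_j)$ for $j=1,2$ and $i=1,2$. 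Since $|t_j|<|t|$, the induction hypothesis yields $t_j=t'_j$, hence $t=t'$.

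The main obstacle I expect is the base case with single letters, specifically ruling out the degenerate possibility that $\tau$ itself equals one of the two distinguished letters $a_1,a_2$; this is exactly where the hypothesis $a_1\neq a_2$ is used, and it must be handled with a small case analysis rather than waved away. Everything else reduces cleanly to the unique decomposition lemma and the homomorphism property of grafting, so the induction goes through routinely once the skeleton agreement $t\sim_s t'$ is established up front.
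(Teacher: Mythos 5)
There is a genuine gap, and it sits exactly where you are most confident: the up-front claim that the hypotheses force $t\sim_s t'$. Your justification — that $\nu_c\circ\GS{a_i}{\tau}$ collapses to $\nu_c$ ``up to the skeleton contributed by $\tau$'', so the skeletons of $t$ and $t'$ must match — is not valid when $|\tau|\neq 1$. What the hypotheses give you is that the skeletons of $\GS{a_i}{\tau}(t)$ and $\GS{a_i}{\tau}(t')$ agree, and grafting a non-letter $\tau$ distorts the skeleton in a way that depends on where the $a_i$-leaves sit. Concretely, a \emph{single} grafting equation does not imply similarity: take $\tau=b\star b$, $t=a_1\star b$, $t'=\tau\star b$; then $\GS{a_1}{\tau}(t)=\tau\star b=\GS{a_1}{\tau}(t')$ although $t\not\sim_s t'$. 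With \emph{both} equations similarity does hold, but only because the full conclusion $t=t'$ holds — so deriving similarity from the two hypotheses is essentially the proposition itself, and assuming it at the outset is circular. The gap is load-bearing: you use similarity to make sense of the induction on $|t|=|t'|$, to restrict the base case to the situation where $t$ and $t'$ are \emph{both} in $\set{\bz}\cup\Sigma$, and to decompose $t'$ in the inductive step. The cases your assumption silently discards ($t=\bz$ with $t'\neq\bz$, and $t$ a letter with $|t'|>1$) are precisely the ones the paper must argue explicitly.

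The fix is the paper's device: induct on $\min(|t|,|t'|)$ with no similarity hypothesis. Then the base cases are asymmetric: if $t=\bz$ and $t'\neq\bz$, one shows $\GS{a_i}{\tau}(t')=\bz$ forces $\tau=\bz$ and every leaf of $t'$ to be labelled both $a_1$ and $a_2$, a contradiction; if $t$ is a letter and $|t'|>1$, then $\GS{a_1}{\tau}(t')$ has size $0$ or $\geq 2$ and cannot equal a letter. Your letter-versus-letter analysis (including the sub-case $\set{b,b'}=\set{a_1,a_2}$ forcing $\tau=a_1=a_2$) is correct and in fact slightly cleaner than the paper's, and your inductive step is essentially the paper's — though note one small slip there: $|t|>1$ does \emph{not} guarantee that $\pair{\GS{a_i}{\tau}(t_1),\GS{a_i}{\tau}(t_2)}\neq\pair{\bz,\bz}$ (take $\tau=\bz$ and $t=a_i\star a_i$); that degenerate case must be observed to be harmless because $x\star y=\bz$ forces $x=y=\bz$, so componentwise equality holds trivially.
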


\begin{proof}
By induction on $\min(|t|, |t'|)$. 

{\em Basis Case 0:} If $\min(|t|, |t'|) = 0$ then one of $t, t'$ is $\bz$,
say $t = \bz$. If $t' \neq \bz$ then $t'$ contains at least one
occurrence of some letter $b$. As
$\GS{a_i}{\tau}(t') = \GS{a_i}{\tau}(t) = \GS{a_i}{\tau}(\bz) = \bz$, we
have $\GS{a_i}{\tau}(t') = \bz$, which implies (because $t' \neq \bz$
was supposed) that $\tau = \bz$. Then $\GS{a_i}{\tau}(t') = \bz$
implies that all leaves of $t'$ are equal to both $a_1$ and $a_2$,
a contradiction. Hence $t' = \bz$ and $t = t'$.

{\em Basis Case 1:} If $\min(|t|, |t'|) = 1$ then $t$ or $t'$ is a letter,
say $t = b$, and there is one $i$, say $i = 1$, such that
$a_1 \neq b$, thus $b = {\GS{a_1}{\tau}}(t) = {\GS{a_1}{\tau}}(t')$. 
  
\begin{itemize} 
\item If $t'$ is a letter $c\neq b$, then ${\GS{a_1}{\tau}}(c) = b$. If
$c = a_1$ then $b = {\GS{a_1}{\tau}}(c) = \tau$. Since
$\GS{a_2}{\tau}(c) = c = \GS{a_2}{\tau}(b) \in \set {\tau,b} = \set {b}$,
we have that $c = b$, a contradiction. If $c \neq a_1$ and
${\GS{a_1}{\tau}}(c) = c \neq b = {\GS{a_1}{\tau}}(c)$, a
contradiction. Hence $t' = t = b$.

\item If $|t'| > 1$ then $t' = t'_1 \star t'_2$, and
${\GS{a_1}{\tau}(t')} = {\GS{a_1}{\tau}(t'_1)} \star {\GS{a_1}{\tau}(t'_2)}$
which can be only of size 0 or $\geq 2$, contradicting
${\GS{a_1}{\tau}}(t') = b$. this case is excluded.
\end{itemize}

\smallskip

{\em Induction:} If $\min(|t|, |t'|) > 1$ then $t = t_1 \star t_2$ and
$t' = t'_1 \star t'_2$ with $\min(|t_i|, |t'_i|) < \min(|t|, |t'|)$, for
$i = 1, 2$. By Lemma \ref{uniquedecompo},
${\GS{a_j}{\tau}}(t_1) \star {\GS{a_j}{\tau}}(t_2)
 = {\GS{a_j}{\tau}}(t'_1) \star {\GS{a_j}{\tau}}(t'_2)$ implies
${\GS{a_j}{\tau}}(t_i) = {\GS{a_j}{\tau}}(t'_i)$, for $j = 1, 2$. By the
induction hypothesis $t_i = t'_i$, hence $t = t'$.
\end{proof}
 
\begin{proposition}\label{p:tt'sim}
Let us fix $a \in \Sigma$, with $|\Sigma| \geq 3$, $t,\ t' \in \+T$
such that $t\sim_s t'$.

(1) If, for some $\tau \in \+T$ of size $|\tau| \neq 1$,
${\GS{a}{\tau}}(t) = {\GS{a}{\tau}}(t')$, then $t = t'$.

(2) If, for all $b \neq a$, $b \in \Sigma$,
${\GS{a}{b}}(t) = {\GS{a}{b}}(t')$, then $t = t'$.
\end{proposition}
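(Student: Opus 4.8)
The plan is to use the hypothesis $t\sim_s t'$ to view $t$ and $t'$ as two leaf-labellings of one common skeleton, and then to recover each individual leaf label from the single tree $u=\GS{a}{\tau}(t)=\GS{a}{\tau}(t')$ by inspecting the ``subtree of $u$ sitting at that leaf's address''. The one point needing care -- the likely main obstacle, modest as it is -- is to justify that $\GS{a}{\tau}$ acts \emph{positionally} on the leaves, so that this subtree is unambiguously defined and can be read off interchangeably from $t$ or from $t'$, and to keep this valid when $\tau=\bz$, where the substitution prunes and may collapse parts of the tree.

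\emph{Set-up.} First dispose of the degenerate case: if $t=\bz$ then, since $\sim_s$ is the ``same skeleton'' congruence, $t'=\bz=t$. So assume $t,t'\neq\bz$. Since $t\sim_s t'$, the trees $\nu_a(t)$ and $\nu_a(t')$ coincide, so $t$ and $t'$ have the same set of leaf addresses $p_1,\dots,p_k\in\set{0,1}^*$ (pairwise incomparable for the prefix order, $k\geq1$); write $t=\{\,p_j\ell_j : 1\leq j\leq k\,\}$ and $t'=\{\,p_j\ell'_j : 1\leq j\leq k\,\}$ with $\ell_j,\ell'_j\in\Sigma$. A straightforward induction on the tree -- using that $\star$ prepends $0$ and $1$ and that $\GS{a}{\tau}$ fixes each letter $\neq a$ and sends $a$ to $\tau$ -- yields, for every $\tau\in\+T$, the decomposition $\GS{a}{\tau}(t)=\bigcup_{j=1}^{k}\sigma_j$ where $\sigma_j=p_j.\tau:=\{\,p_jw : w\in\tau\,\}$ when $\ell_j=a$ and $\sigma_j=\{p_j\ell_j\}$ when $\ell_j\neq a$. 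Because the $p_j$ are pairwise incomparable, for $i\neq j$ no word of $\sigma_i$ has $p_j$ as a prefix; hence, writing $u:=\GS{a}{\tau}(t)$, the subtree $u/p_j:=\{\,w : p_jw\in u\,\}$ equals $\tau$ when $\ell_j=a$ and equals the one-leaf tree $\{\ell_j\}$, of size $1$, when $\ell_j\neq a$. Performing the same computation on $t'$ shows that $u/p_j$ equals $\tau$ when $\ell'_j=a$ and $\{\ell'_j\}$ when $\ell'_j\neq a$.

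\emph{Proof of (1).} Here $|\tau|\neq1$, so by the set-up $|u/p_j|=|\tau|$ exactly when $\ell_j=a$ and $|u/p_j|=1$ otherwise, and similarly with $\ell'_j$; comparing and using $|\tau|\neq1$ gives $\ell_j=a\iff\ell'_j=a$. If $\ell_j=a$ then $\ell'_j=a=\ell_j$; if $\ell_j\neq a$ then $\ell'_j\neq a$ too and $\{\ell_j\}=u/p_j=\{\ell'_j\}$, so $\ell_j=\ell'_j$. Hence $\ell_j=\ell'_j$ for every $j$ and $t=t'$. This argument treats $\tau=\bz$ (where $u/p_j=\bz$ whenever $\ell_j=a$) and $|\tau|\geq2$ on the same footing.

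\emph{Proof of (2).} Apply the set-up with $\tau=b$ for $b\in\Sigma$, $b\neq a$: then $\GS{a}{b}$ merely relabels the $a$-leaves into $b$, so $u/p_j=\{b\}$ when $\ell_j\in\{a,b\}$ and $u/p_j=\{\ell_j\}$ otherwise. Thus $\GS{a}{b}(t)=\GS{a}{b}(t')$ forces, for each $j$, that $\ell_j=\ell'_j$ or $\{\ell_j,\ell'_j\}=\{a,b\}$. If $\ell_j\neq\ell'_j$ for some $j$, the equality $\{\ell_j,\ell'_j\}=\{a,b\}$ would have to hold for \emph{every} $b\in\Sigma\setminus\{a\}$; but $|\Sigma|\geq3$ provides two distinct such letters $b_1\neq b_2$ with $\{a,b_1\}\neq\{a,b_2\}$ -- a contradiction. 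So $\ell_j=\ell'_j$ for all $j$ and $t=t'$. Thus part (1) uses exactly the hypothesis $|\tau|\neq1$ and part (2) exactly $|\Sigma|\geq3$, while $t\sim_s t'$ is essential in (1) and automatically true in (2).
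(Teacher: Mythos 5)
Your proof is correct, and it takes a genuinely different route from the paper's. The paper argues by induction on $|t|=|t'|$: the base case is a case analysis on letters, and the inductive step uses the unique decomposition $t=t_1\star t_2$, $t'=t'_1\star t'_2$ into similar components (mirroring the induction in its Proposition on two graftings). You instead exploit the word coding of trees directly: similarity gives a common set of pairwise prefix-incomparable leaf addresses $p_1,\dots,p_k$, a small structural induction shows $\GS{a}{\tau}$ acts positionally (replacing the $a$-labelled leaf at $p_j$ by $p_j.\tau$ and fixing the others), and incomparability of the $p_j$ lets you recover the subtree $u/p_j$ of $u=\GS{a}{\tau}(t)=\GS{a}{\tau}(t')$ unambiguously from either decomposition. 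From there, (1) follows because $|u/p_j|$ distinguishes $a$-leaves from non-$a$-leaves exactly when $|\tau|\neq 1$ (including the pruning case $\tau=\bz$), and (2) follows because a single mismatch $\ell_j\neq\ell'_j$ would force $\{\ell_j,\ell'_j\}=\{a,b\}$ for every $b\neq a$, impossible once $|\Sigma|\geq 3$. What your approach buys is a transparent, non-inductive core argument that isolates precisely where each hypothesis is used and handles $\tau=\bz$ uniformly; the cost is the auxiliary positional-action lemma, which you only sketch ("a straightforward induction") but which is indeed routine given the paper's Definitions of $\star$ and of trees as prefix-incomparable word sets. Both proofs are complete; the paper's is more self-contained given its earlier lemmas, yours is arguably more illuminating.
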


\begin{proof}
Both (1) and (2) are proved by induction on $|t| = |t'|$, and in both
cases, the result obviously holds if $t = t' = \bz$.
 
\medskip

\noindent {\em Basis:} If $|t| = |t'| =1$.
   
(1) We assume that $t = b \neq c = t'$.
   
\noindent (i) If $a \not\in \{b,c\}$ then
${\GS{a}{\tau}}(t) = b \neq c = {\GS{a}{\tau}}(t')$, a contradiction.
 
\noindent (ii) Otherwise, $a \in \{b,c\}$, e.g., $a = b = t$, then 
${\GS{a}{\tau}}(t) = {\GS{a}{\tau}}(a) = \tau$ and
${\GS{a}{\tau}}(t') = {\GS{a}{\tau}}(c) = c$, hence $\tau = c$, which
contradicts $|\tau| \neq 1$.
  
(2) We assume that $t = b \neq c = t'$.
  
\noindent (i) The case $a \not\in \{b, c\}$ yields a contradiction as
in case (1).
  
\noindent (ii) Otherwise, e.g., $a = b$, there exists $d \not\in \{a, c\}$,
and we get $ {\GS{a}{d}}(t) = {\GS{a}{d}}(a) = d$ and
${\GS{a}{d}}(t') = {\GS{a}{d}}(c) = c$, contradicting
${\GS{a}{d}}(t) = {\GS{a}{d}}(t')$.
  
\medskip

\noindent {\em Induction:} As in Proposition \ref{p:G2} in both cases:
since $t$ and $t'$ are similar, $t = t_1 \star t_2 $ and
$t' = t'_1 \star t'_2$ with $t_i$ similar to $t'_i$ and $|t_i| < |t'_i|$.
\end{proof}
 
\subsection{Congruence preserving functions on  trees}
\label{sec:congr-pres-funct}

\begin{definition}\label{def1:cp}
A function $f \colon \+T^n \to \+T$ is {\em congruence preserving}
(abbreviated into CP) if for all congruences $\sim$ on~$\+T$, for
all $t_1, \ldots, t_n,\ t'_1, \ldots, t'_n$ in $\+T$, $t_i \sim t'_i$ for all
$i = 1, \ldots, n$, implies $f(t_1, \ldots, t_n) \sim f(t'_1, \ldots, t'_n)$.
\end{definition}

\begin{remark}\label{r:hfh=fh}
(1) It follows from Lemma \ref{l:ker} that CP functions are
characterized by the fact that for all homomorphisms $h$ from
$\pair{\ct, \star}$ to any algebra $\pair{A, \star_A}$, $h(t_i) = h(t'_i)$
for all $i =1, \ldots, n$, implies
$h(f(t_1, \ldots, t_n)) = h( f(t'_1, \ldots, t'_n))$.
 
(2) If $f$ is CP and endomorphism $h$ is idempotent then
$h(f(t_1, \ldots, t_n)) = h(f(h(t_1), \ldots, h(t_n)))$. Indeed, let
$\sim_h$ be the congruence associated with $h$, for
$i = 1, \ldots, n$, we have $h(t_i) = h(h(t_i))$, hence
$t_i \sim_h h(t_i)$, whence the result.
\end{remark}

\noindent We will show that congruence preserving functions on the
algebra $\langle \+T(\Sigma), \star \rangle$ are polynomial functions.
Let us first formally define polynomials on trees.
 
\begin{definition}
Let $x_1, \ldots, x_n \not \in \Sigma$ be called {\em variables}. A
{\em polynomial} $P(x_1, \ldots, x_n)$ is a tree on the alphabet
$\Sigma \cup \{x_1, \ldots, x_n\}$.

With every polynomial $P(x_1, \ldots, x_n)$ we will associate a
{\em polynomial function} $\tilde P \colon \+T^n \to \+T$ defined
by: for any
$\vec u = \langle t_1, \ldots, t_i, \ldots, t_n \rangle \in \+T^n$,

$\tilde P(\vec u) = \left\{
    \begin{array}{ll}
      P  & \mbox{if $P = \bz$ or $P \in \Sigma$}\\
      t_i& \mbox{if $P = x_i$}\\
      \widetilde {P_1}(\vec u) \star \widetilde {P_2}(\vec u)& \mbox{if $P = P_1 \star P_2$}
    \end{array}
\right.$
\end{definition}

Obviously, every polynomial function is CP. Our goal is to prove the
converse, namely

\begin{theorem}\label{E}
Let $|\Sigma| \geq 3$. If $g \colon \+T^n \to \+T$ is CP then there
exists a polynomial $P_g$ such that $g = \widetilde {P_g}$.
\end{theorem}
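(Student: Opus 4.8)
The plan is to produce an explicit candidate polynomial $P_g$ and then prove $\widetilde{P_g}=g$, i.e.\ $\widetilde{P_g}(\vec t)=g(\vec t)$ for every $\vec t=(t_1,\dots,t_n)\in\+T^n$, by induction on the total size $|t_1|+\cdots+|t_n|$. The point to keep in mind is that $g$ is merely congruence preserving, not a homomorphism, so the only lever on the value $g(\vec t)$ is Remark~\ref{r:hfh=fh}: for a homomorphism $h$ one may replace $\vec t$ by any tuple $h$ fails to separate from it, and when $h$ is an \emph{idempotent} endomorphism one may in particular replace $t_i$ by $h(t_i)$. The homomorphisms we understand explicitly are the graftings $\GS{a}{\tau}$ (idempotent as soon as $a$ does not occur in $\tau$), and Propositions~\ref{p:G2} and~\ref{p:tt'sim} are the closing devices: they recover a tree from its images under two well-chosen graftings, or under one grafting inside a $\sim_s$-class. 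So the whole proof is an orchestration of graftings, with Remark~\ref{r:hfh=fh}(2) used to push a grafting across $g$ and Propositions~\ref{p:G2}/\ref{p:tt'sim} used to deduce an equality of trees from equalities of their grafted images.

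\noindent\textbf{Construction of $P_g$.} Fix three distinct letters of $\Sigma$. For each $i$ choose a large, rigid ``code'' tree $\theta_i$ and a perturbation $\theta_i'\sim_s\theta_i$ differing from it in a single leaf, the $\theta_i$ being pairwise non-similar and pairwise incomparable for the subtree order (Lemma~\ref{uniquedecompo}). Put $G=g(\theta_1,\dots,\theta_n)$. Since $g$ preserves $\sim_s$, changing $\theta_i$ into $\theta_i'$ leaves $G$ inside its $\sim_s$-class, and the leaves where $G$ and $g(\dots,\theta_i',\dots)$ differ mark precisely the occurrences of $\theta_i$ in $G$ that ``come from the $i$\,th argument''; rigidity guarantees these are genuine, full, pairwise disjoint copies of $\theta_i$. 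Replacing each such copy by the variable $x_i$ yields a tree $P_g$ over $\Sigma\cup\{x_1,\dots,x_n\}$ with $\widetilde{P_g}(\theta_1,\dots,\theta_n)=G=g(\theta_1,\dots,\theta_n)$.

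\noindent\textbf{Verification.} \emph{Inductive step:} suppose some $|t_i|\geq 2$. Choose a subtree $\sigma$ of $t_i$ and a letter $e$ occurring neither in $\sigma$ nor in any $t_j$ ($j\neq i$), and write $t_i=\GS{e}{\sigma}(s_i)$, where $s_i$ is $t_i$ with that occurrence of $\sigma$ replaced by the leaf $e$; then $|s_i|<|t_i|$ and $\GS{e}{\sigma}$ is idempotent. By Remark~\ref{r:hfh=fh}(2) applied to the idempotent $\GS{e}{\sigma}$ and to both $g$ and $\widetilde{P_g}$ (using $\GS{e}{\sigma}(s_i)=t_i$ and $\GS{e}{\sigma}(t_j)=t_j$ for $j\neq i$),
\[
\GS{e}{\sigma}\bigl(g(\dots,s_i,\dots)\bigr)=\GS{e}{\sigma}\bigl(g(\vec t)\bigr),\qquad \GS{e}{\sigma}\bigl(\widetilde{P_g}(\dots,s_i,\dots)\bigr)=\GS{e}{\sigma}\bigl(\widetilde{P_g}(\vec t)\bigr);
\]
since the induction hypothesis gives $g(\dots,s_i,\dots)=\widetilde{P_g}(\dots,s_i,\dots)$, we get $\GS{e}{\sigma}(g(\vec t))=\GS{e}{\sigma}(\widetilde{P_g}(\vec t))$. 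Running this for two distinct such letters $e_1\neq e_2$ with the \emph{same} $\sigma$, Proposition~\ref{p:G2} forces $g(\vec t)=\widetilde{P_g}(\vec t)$; when only one free letter is available one instead establishes $g(\vec t)\sim_s\widetilde{P_g}(\vec t)$ and concludes with Proposition~\ref{p:tt'sim}. \emph{Base case:} when every $|t_i|\leq 1$, i.e.\ $\vec t\in(\set{\bz}\cup\Sigma)^n$, one checks directly, using the congruences that identify $\theta_i$ with $t_i$ and the idempotent graftings $\GS{a}{b}$, that the finitely many values $g(\vec t)$ are exactly those dictated by $\widetilde{P_g}$. As $\vec t$ was arbitrary, $g=\widetilde{P_g}$.

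\noindent\textbf{Main obstacle.} The hard parts are two, both manifestations of the same tension: arbitrarily many variables but only a fixed, possibly three-element, alphabet to encode them with. First, in the construction one must make the genericity of the codes $\theta_i$ precise enough that the ``unplugging'' is unambiguous and yields an honest polynomial; this requires, besides the rigidity and incomparability of the $\theta_i$, an a priori bound $|g(\vec t)|\leq c_0+\sum_i c_i|t_i|$, which should follow from congruence preservation applied to the leaf-counting homomorphism $\+T\to\langle\N,+\rangle$, to the erasing graftings $\GS{a}{\bz}$, and to $\sim_s$. Second, in the inductive step a letter $e$ fresh with respect to \emph{all} the $t_j$ — and a second such letter for Proposition~\ref{p:G2} — need not exist; one then has to work modulo $\sim_s$ and lean on Proposition~\ref{p:tt'sim}(2), which needs only a single slot $a$ while letting $b$ range over the available letters, possibly after an auxiliary induction on the number of distinct letters occurring in $\vec t$. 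An alternative route worth trying is to reduce the arity: encode an $n$-tuple as the single tree $t_1\star(t_2\star(\cdots\star(t_n\star\bz)))$ — an injective polynomial map — reduce $g$ to a unary function on its image, extend that to a congruence preserving function on all of $\+T$, and invoke the unary case; the difficulty there is the extension step. Carrying $\bz$ and single-child nodes through every argument — the genuinely ``non-full'' content of the generalization — is routine but must not be skipped.
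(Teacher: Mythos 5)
Your overall architecture (build a candidate $P_g$, then propagate agreement from small arguments to all of $\+T^n$ by induction using graftings and Propositions~\ref{p:G2}/\ref{p:tt'sim}) parallels the paper's split into Proposition~\ref{prop:reduc} and Proposition~\ref{andre}, but both halves of your argument have genuine gaps at exactly the points where the paper does its real work. First, the ``unplugging'' construction of $P_g$ is not a proof: from $\theta_i\sim_s\theta_i'$ and CP you only get $G\sim_s G'$, i.e.\ the two values have the same skeleton and differ in some set of leaf labels; nothing forces that set of positions to organize into pairwise disjoint, \emph{full} copies of $\theta_i$ sitting inside $G$ (that is essentially the conclusion you are trying to prove, namely that $g$ looks like substitution into a fixed template). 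The a priori bound $|g(\vec t)|\le c_0+\sum_i c_i|t_i|$ you invoke to rescue this does not follow from the leaf-counting homomorphism into $\langle\N,+\rangle$: CP only tells you that the leaf count of $g(\vec t)$ is \emph{some} function of the leaf counts of the $t_i$, and the paper's own introduction exhibits a CP non-polynomial function on $\langle\Z,+\rangle$, so no affineness comes for free from that quarter. The paper avoids all of this by constructing $P_g$ by induction on the size of the \emph{output} $g(\vec u)$ for $\vec u\in\Sigma^n$, peeling off $\star$ via Lemma~\ref{uniquedecompo}, with Proposition~\ref{pr3Andre} (constant or projection) as the base.

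Second, your base case ``when every $|t_i|\le1$ one checks directly'' is precisely the hard content of Propositions~\ref{p2Andre} and~\ref{pr3Andre}: determining $g$ on $\Sigma^n$ requires the delicate three-letter grafting argument showing $g\restrict{\Sigma^n}$ is built from constants and projections, and there is no endomorphism collapsing a large $\theta_i$ to a letter, so the ``congruences that identify $\theta_i$ with $t_i$'' you appeal to are not available in the form you need. Third, in the inductive step the shortage of fresh letters is not a corner case but the generic situation when $|\Sigma|=3$, and your fallback via Proposition~\ref{p:tt'sim}(2) presupposes $g(\vec t)\sim_s\widetilde{P_g}(\vec t)$, which you never establish (the paper needs Lemma~\ref{similar} plus the two-stage detour through $t_b=\GS{a}{b}(t)$ in Proposition~\ref{unary} exactly to get around this). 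So while the strategy is reasonable and partly parallel to the paper's, as written the proof has three unfilled holes, each of which corresponds to a substantive proposition in the paper rather than to a routine verification.
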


\section{Equality of CP functions}
\label{sec:equal-cp-funct}
\begin{notation}
For any $f \colon \ct^n \to \ct$, we denote by $f\restrict {\Sigma^n}$
its restriction to $\Sigma^n$.
\end{notation}

In this section we prove that if $f$ and $g$ are two CP functions,
then $f\restrict{\Sigma^n} = g\restrict{\Sigma^n}$ implies $f = g$,
{\em provided that $\Sigma$ contains at least three letters.}

\begin{lemma}\label{similar} 
Suppose $\Sigma$ has at least three letters. If $f$ and $g$ are
unary CP functions on $ \+T$ such that  for all $a \in \Sigma$,
$f(a) = g(a)$ then for all $t \in \+T$, $f(t)$ and $g(t)$ are similar.
\end{lemma}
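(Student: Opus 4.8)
The plan is to reduce the statement, via the similarity congruence $\sim_s$, to trees all of whose leaves carry one fixed letter, and then to pin down the common value of $f$ and $g$ on such trees using Proposition~\ref{p:G2}. Since $|\Sigma|\ge 3$, I fix at the outset three \emph{distinct} letters $a,b,c\in\Sigma$.

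The core claim is that $f(\nu_a(t))=g(\nu_a(t))$ for every $t\in\+T$. Write $s=\nu_a(t)$; every leaf of $s$ equals $a$, so neither $b$ nor $c$ occurs in $s$, and therefore $\GS{b}{s}$ and $\GS{c}{s}$ are idempotent endomorphisms. Since $\GS{b}{s}(b)=s$, Remark~\ref{r:hfh=fh}(2) applied to the idempotent endomorphism $\GS{b}{s}$ and the argument $b$ gives $\GS{b}{s}(f(b))=\GS{b}{s}(f(\GS{b}{s}(b)))=\GS{b}{s}(f(s))$, and likewise $\GS{b}{s}(g(b))=\GS{b}{s}(g(s))$. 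As $f(b)=g(b)$ by hypothesis, this yields $\GS{b}{s}(f(s))=\GS{b}{s}(g(s))$; running the same reasoning with $c$ in place of $b$ gives $\GS{c}{s}(f(s))=\GS{c}{s}(g(s))$. Proposition~\ref{p:G2}, applied with $\tau=s$ and the two distinct letters $b\ne c$, then forces $f(s)=g(s)$, which proves the claim.

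To conclude, I use that $t$ and $\nu_a(t)$ have the same skeleton, i.e.\ $t\sim_s\nu_a(t)$. Since $\sim_s$ is a congruence and $f,g$ are CP, one gets $f(t)\sim_s f(\nu_a(t))$ and $g(t)\sim_s g(\nu_a(t))$; combining these with $f(\nu_a(t))=g(\nu_a(t))$ and transitivity of $\sim_s$ gives $f(t)\sim_s g(t)$. The only real obstacle is seeing how to set up Proposition~\ref{p:G2}: the trick is to take $\tau=\nu_a(t)$, so that the hypothesis $f\restrict{\Sigma}=g\restrict{\Sigma}$ — which a priori constrains only the values on letters — gets bootstrapped, through the two idempotent graftings $\GS{b}{s}$ and $\GS{c}{s}$, into a genuine equality of $f$ and $g$ on every constant-leaf tree. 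This is also precisely where three letters are needed: $a$ must be distinct from both $b$ and $c$ so that $b,c$ are absent from $\nu_a(t)$ (making the graftings idempotent), and $b\ne c$ so that Proposition~\ref{p:G2} applies. No further induction is required beyond the one already inside Proposition~\ref{p:G2}.
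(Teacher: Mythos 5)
Your proof is correct and follows essentially the same route as the paper's: reduce to $s=\nu_a(t)$ via the idempotent $\nu_a$ (equivalently, via the congruence $\sim_s$ and CP), use the idempotent graftings $\GS{b}{s}$ and $\GS{c}{s}$ together with Remark~\ref{r:hfh=fh}(2) and the hypothesis $f(b)=g(b)$, $f(c)=g(c)$ to get the two equalities needed for Proposition~\ref{p:G2}, and conclude $f(s)=g(s)$. The only cosmetic difference is that you phrase the reduction step as $t\sim_s\nu_a(t)$ plus congruence preservation, while the paper writes it directly as $\nu_a(f(t))=\nu_a(f(\nu_a(t)))$; these are the same argument.
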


\begin{proof} We have to show that $\nu_a(f(t)) = \nu_a(g(t))$ for
some $a \in \Sigma$ and for all $t$. As $\nu_a$ is idempotent
and $f$ is CP, by Remark \ref{r:hfh=fh}~(2),
$\nu_a(f(t)) = \nu_a(f(\nu_a(t)))$, and similarly for $g$. Hence it
suffices to prove $f(\nu_a(t)) = g(\nu_a(t))$. Let
$b_1, \ b_2 \in \Sigma$ such that $a,\ b_1,\ b_2$ are pairwise
distinct. As $\GS{b_i}{\nu_a(t)}$ is idempotent, by Remark
\ref{r:hfh=fh}~(2), we have
$\GS{b_i}{\nu_a(t)}(f(b_i)) = \GS{b_i}{\nu_a(t)}(f(\nu_a(t)))$. The
same holds for $g$, i.e.,
$\GS{b_i}{\nu_a(t)}(g(b_i)) = \GS{b_i}{\nu_a(t)}(g(\nu_a(t)))$. From
$f(b_i) = g(b_i)$, we deduce that
$\GS{b_i}{\nu_a(t)}(f(\nu_a(t))) = \GS{b_i}{\nu_a(t)}(g(\nu_a(t)))$.
This equality holds for $i = 1, 2$, thus Proposition \ref{p:G2}
implies that $f(\nu_a(t)) = g(\nu_a(t))$.
\end{proof}

\noindent The following proposition shows that a unary CP function
$f$ is completely determined by its values on $\Sigma$.

\begin{proposition} \label{unary}
Suppose $\Sigma$ has at least three letters. If $f$ and $g$ are
unary CP functions on $\+T$ such that for all $a \in \Sigma$,
$f(a) = g(a)$ then for all $t \in \+T$, $f(t) = g(t)$.
\end{proposition}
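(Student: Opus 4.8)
The plan is to prove the statement by induction on $|t|$, using Lemma~\ref{similar} to reduce the problem to a situation where Proposition~\ref{p:tt'sim} applies. First I would dispose of the base cases: if $|t|=0$ then $t=\bz$ and $f(\bz)=g(\bz)=\bz$ because both $f$ and $g$ are CP (hence, by Remark~\ref{r:hfh=fh}(2) applied to the endomorphism $\nu_a$, or more simply since $\bz$ is the image of any leaf under a suitable grafting, $f(\bz)=g(\bz)$); and if $|t|=1$ then $t\in\Sigma$, so $f(t)=g(t)$ holds by hypothesis.

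For the induction step, fix $t$ with $|t|>1$ and assume $f(s)=g(s)$ for all $s$ with $|s|<|t|$. By Lemma~\ref{similar}, $f(t)\sim_s g(t)$, so $f(t)$ and $g(t)$ have the same skeleton; in particular they have the same size. The idea is now to pick a letter $a\in\Sigma$ and a grafting $\GS{a}{\tau}$ that, on the one hand, collapses $t$ to something of strictly smaller size so that the induction hypothesis can be invoked, and on the other hand is injective enough on the pair $\langle f(t),g(t)\rangle$ (via Proposition~\ref{p:tt'sim}) to conclude $f(t)=g(t)$ from $\GS{a}{\tau}(f(t))=\GS{a}{\tau}(g(t))$. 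Concretely: choose any leaf label $a$ occurring in $t$ and let $\tau$ be a tree with $|\tau|\neq 1$ in which $a$ does not occur, so that $\GS{a}{\tau}$ is idempotent. Since $\GS{a}{\tau}$ is idempotent and $f,g$ are CP, Remark~\ref{r:hfh=fh}(2) gives $\GS{a}{\tau}(f(t))=\GS{a}{\tau}(f(\GS{a}{\tau}(t)))$ and likewise for $g$. If $\GS{a}{\tau}(t)$ has size strictly less than $|t|$, the induction hypothesis yields $f(\GS{a}{\tau}(t))=g(\GS{a}{\tau}(t))$, hence $\GS{a}{\tau}(f(t))=\GS{a}{\tau}(g(t))$; combined with $f(t)\sim_s g(t)$ and $|\tau|\neq 1$, Proposition~\ref{p:tt'sim}(1) forces $f(t)=g(t)$.

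The main obstacle is guaranteeing that the grafting genuinely shrinks $t$, i.e. that $|\GS{a}{\tau}(t)|<|t|$. Replacing an occurrence of the leaf $a$ by $\tau$ changes the size by $|\tau|-1$ per occurrence, so choosing $\tau=\bz$ (which has size $0\neq 1$ and trivially does not contain $a$) decreases the size by the number of occurrences of $a$ in $t$, which is at least one; thus $|\GS{a}{\bz}(t)|<|t|$ as soon as $a$ occurs in $t$, and $a$ can always be chosen among the labels of the leaves of $t$ since $|t|>1$ means $t$ has at least one leaf. So I would instantiate the previous paragraph with $\tau=\bz$. One must only double-check the degenerate possibility that $\GS{a}{\bz}(t)=\bz$ (it happens exactly when every leaf of $t$ is labelled $a$): in that case $f(\GS{a}{\bz}(t))=f(\bz)=g(\bz)=g(\GS{a}{\bz}(t))$ still holds, so the argument goes through unchanged.

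Alternatively, if one prefers to avoid any edge-case fuss, Proposition~\ref{p:tt'sim}(2) offers a symmetric route: for each $b\in\Sigma$ with $b\neq a$, the grafting $\GS{a}{b}$ preserves size, so it does not directly enable induction; but one can instead use $\GS{a}{\bz}$ together with $\GS{a'}{\bz}$ for a second label $a'$ of $t$ (if $t$ is not constant) and Proposition~\ref{p:G2}. I expect the cleanest write-up to be the $\tau=\bz$ version above, with the single observation that $f$ and $g$ agree on $\bz$ serving to absorb all degenerate subcases.
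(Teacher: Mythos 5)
Your proof is correct, and it reaches the conclusion by a genuinely different route from the paper's. The paper argues directly, without induction on $|t|$: it forms $t_b=\GS{a}{b}(t)$ for each $b\neq a$, uses the idempotent grafting $\GS{a}{t_b}$ (which sends the single letter $a$ to the whole tree $t_b$) together with Remark~\ref{r:hfh=fh}(2) to deduce $\GS{a}{t_b}(f(t_b))=\GS{a}{t_b}(g(t_b))$ from $f(a)=g(a)$, concludes $f(t_b)=g(t_b)$ by Proposition~\ref{p:tt'sim}(1), and then recovers $f(t)=g(t)$ from the family of equalities $\GS{a}{b}(f(t))=\GS{a}{b}(g(t))$ for all $b\neq a$ via Proposition~\ref{p:tt'sim}(2). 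You instead induct on $|t|$ and use only part (1) of Proposition~\ref{p:tt'sim}, with the collapsing grafting $\GS{a}{\bz}$ supplying the size reduction. Both arguments rest on the same two pillars (Lemma~\ref{similar} plus the injectivity of graftings on similar trees); the paper's version avoids the induction and the size bookkeeping, while yours dispenses with part (2) of Proposition~\ref{p:tt'sim} and with the auxiliary tree $t_b$. Your size-decrease claim is sound: because of the collapse $\bz\star\bz=\bz$ the size drops by \emph{at least} the number of occurrences of $a$, and strict decrease is all you need.

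One blemish should be repaired in the base case: the assertion that $f(\bz)=g(\bz)=\bz$ ``because $f$ and $g$ are CP'' is false --- a CP function need not fix $\bz$ (every constant function is CP), and Remark~\ref{r:hfh=fh}(2) applied to $\nu_a$ gives nothing here since $\nu_a(\bz)=\bz$. What is true, and what your parenthetical gestures at, is only $f(\bz)=g(\bz)$, and it is proved by exactly the two-step pattern of your induction step: since $\GS{a}{\bz}(a)=\bz=\GS{a}{\bz}(\bz)$, congruence preservation gives $\GS{a}{\bz}(f(\bz))=\GS{a}{\bz}(f(a))=\GS{a}{\bz}(g(a))=\GS{a}{\bz}(g(\bz))$, and then Lemma~\ref{similar} together with Proposition~\ref{p:tt'sim}(1) applied with $\tau=\bz$ yields $f(\bz)=g(\bz)$. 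With the base case stated this way the argument is complete.
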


\begin{proof} 
Let $a$ be a letter that occurs in $t$. For any other letter $b$, the
endomorphisms $\GS{a}{b}$ and $\GS{a}{t_b}$ are idempotent,
where $t_b = \GS{a}{b}(t)$. Thus by Remark \ref{r:hfh=fh}~(2),
$\GS{a}{t_b}(f(a)) = \GS{a}{t_b}(f(t_b))$, and
$\GS{a}{t_b}(g(a)) = \GS{a}{t_b}(g(t_b))$. As $f(a) = g(a)$  we have
$\GS{a}{t_b}(f(t_b)) = \GS{a}{t_b}(g(t_b))$. By Lemma \ref{similar},
$f(t_b)$ and $g(t_b)$ are similar, and by Proposition \ref{p:tt'sim}
(1) $f(t_b) = g(t_b)$.

On the other hand, as $f$ and $g$ are CP and
$t \sim_{\GS{a}{b}} t_b$, we get $\GS{a}{b}(f(t)) = \GS{a}{b}(f(t_b))$
and $\GS{a}{b}(g(t)) = \GS{a}{b}(g(t_b))$, hence
$\GS{a}{b}(f(t)) = \GS{a}{b}(g(t)) $. As this is true for all $b \neq a$,
we have by Proposition \ref{p:tt'sim} (2), $f(t) = g(t)$.
\end{proof}

Proposition \ref{unary} now can be generalized.

\begin{notation}
For any function $f \colon \+T^{n+1} \to \+T$, any $t \in \+T$, and
$\vec u = \langle t_1, \ldots, t_n \rangle$, we define 

\noindent (1) a $n$-ary function $f_{\cdots,t}$ obtained by
``freezing'' the (n+1)th argument to the value $t$, and defined by:
for all $\vec u \in \+T^{n}$, $f_{\cdots,t}(\vec u) = f(\vec u,t)$,

\noindent (2) a unary function $f_{\vec u,\cdot}$ obtained by
``freezing'' the $n$ first arguments to the value
$\vec u =\langle t_1, \ldots, t_n \rangle$, and defined by: for all
$t\in\+T$, $f_{\vec u,\cdot}(t) = f(\vec u,t)$.
\end{notation}

\begin{proposition}\label{andre}
Let $f$ and $g$ be n-ary CP functions on $\+T$ such that for all
$a_1, \ldots, a_n \in \Sigma$,
$f(a_1, \ldots, a_n) = g(a_1, \ldots, a_n)$ then for all
$t_1, \ldots, t_n \in \+T$, $f(t_1, \ldots, t_n) = g(t_1, \ldots, t_n)$.
\end{proposition}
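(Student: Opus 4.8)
The plan is to reduce the $n$-ary statement to the unary case (Proposition~\ref{unary}) by induction on $n$, using the ``freezing'' operations just introduced. The base case $n=1$ is exactly Proposition~\ref{unary}. For the induction step, suppose the result holds for $n$-ary CP functions, and let $f,g \colon \+T^{n+1} \to \+T$ be CP functions agreeing on $\Sigma^{n+1}$. The first observation is that for any fixed $t \in \+T$, the frozen functions $f_{\cdots,t}$ and $g_{\cdots,t}$ are $n$-ary and CP (freezing an argument preserves congruence preservation, since one simply uses the reflexivity $t \sim t$ in Definition~\ref{def1:cp}). Likewise, for any fixed $\vec u \in \+T^n$, the frozen functions $f_{\vec u,\cdot}$ and $g_{\vec u,\cdot}$ are unary and CP.

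\medskip

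First I would prove that $f$ and $g$ agree on $\Sigma^n \times \+T$. Fix $a_1,\ldots,a_n \in \Sigma$ and set $\vec a = \langle a_1,\ldots,a_n\rangle$. The unary functions $f_{\vec a,\cdot}$ and $g_{\vec a,\cdot}$ are CP, and for every $b \in \Sigma$ we have $f_{\vec a,\cdot}(b) = f(\vec a, b) = g(\vec a, b) = g_{\vec a,\cdot}(b)$ by hypothesis. Hence by Proposition~\ref{unary}, $f_{\vec a,\cdot}(t) = g_{\vec a,\cdot}(t)$ for all $t \in \+T$; that is, $f(a_1,\ldots,a_n,t) = g(a_1,\ldots,a_n,t)$ for all $a_i \in \Sigma$ and all $t \in \+T$.

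\medskip

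Now I would fix $t \in \+T$ and apply the induction hypothesis to the $n$-ary CP functions $f_{\cdots,t}$ and $g_{\cdots,t}$: by the previous paragraph these agree on $\Sigma^n$, so the induction hypothesis gives $f_{\cdots,t}(\vec u) = g_{\cdots,t}(\vec u)$ for all $\vec u \in \+T^n$, i.e., $f(t_1,\ldots,t_n,t) = g(t_1,\ldots,t_n,t)$ for all $t_1,\ldots,t_n \in \+T$. Since $t$ was arbitrary, this is exactly the claim $f = g$ on $\+T^{n+1}$. The only points requiring any care are the two routine verifications that freezing arguments of a CP function yields a CP function, and matching the arities correctly when invoking Proposition~\ref{unary} versus the induction hypothesis; there is no real obstacle, as all the substantive work (the tree-surgery arguments with graftings) has already been done in Propositions~\ref{p:G2}, \ref{p:tt'sim} and~\ref{unary}.
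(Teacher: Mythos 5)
Your proof is correct and follows essentially the same route as the paper: induction on $n$, freezing arguments to reduce to Proposition~\ref{unary} and to the induction hypothesis. The only (immaterial) difference is the order of the two steps --- you first extend over the last argument via Proposition~\ref{unary} and then over the first $n$ via the induction hypothesis, while the paper does the reverse; your explicit remark that freezing preserves the CP property (by reflexivity of congruences) is a point the paper leaves implicit.
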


\begin{proof}
By induction on $n$. For $n = 1$ the result was proved in
Proposition \ref{unary}. Assume  the result holds for
$n$. By the hypothesis, for all $a_1, \ldots, a_n , a \in \Sigma$, we
have $f(a_1, \ldots, a_n, a) = g(a_1, \ldots, a_n, a)$, i.e.,
$f_{\cdots,a}(a_1, \ldots, a_n) = g_{\cdots,a}(a_1, \ldots, a_n)$.
By the induction applied to $f_{\cdots,a}$, for all
$\vec u \in \+T^n$, $f_{\cdots, a}(\vec u) = g_{\cdots, a}(\vec u)$, or
equivalently $f_{\vec u, \cdot}(a) = g_{\vec u, \cdot}(a)$. As
$f_{\vec u, \cdot}(a) = g_{\vec u, \cdot}(a)$, applying now
Proposition \ref{unary} to $f_{\vec u, \cdot}$ and $g_{\vec u, \cdot}$
yields $f_{\vec u, \cdot}(t) = g_{\vec u, \cdot}(t)$ for all $t$, hence
$f(\vec u, t) = g(\vec u, t)$.
\end{proof}

\section{The algebra of binary trees is affine complete}
To prove that any CP function is a polynomial  function, as a
consequence of Proposition~\ref{andre} and of the fact that a
polynomial function is CP, it is enough to show that the restriction
$f\restrict{\Sigma^n}$ of $f \colon \ct^n \to \ct$ to $\Sigma^n$ is
equal to the restriction $\tilde{P}\restrict{\Sigma^n}$ of a $n$-ary
polynomial function. For such restricted functions we introduce a
weakened version WCP of the CP condition, namely,

\begin{definition}\label{d:WCP}
Function $g \colon \+T^n \to \+T$ is said to be WCP iff for any
idempotent mapping $h \colon \Sigma \to \Sigma$,
$\forall \vec u, \vec v \in \Sigma^n$,
$h(\vec u) = h(\vec v) \implies h(g(\vec u)) = h(g(\vec v))$, where for
$\vec u = \langle u_1, \ldots, u_n \rangle$, $h(\vec u)$ denotes
$\langle h(u_1), \ldots, h(u_n) \rangle$. 
\end{definition}
 
Every CP function is clearly WCP.
 
\begin{lemma}
If $g$ is WCP then for all $\vec u, \vec v \in \Sigma^n$, $g(\vec u)$
and $g(\vec v)$ are similar. 
\end{lemma}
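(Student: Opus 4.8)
The plan is to apply the WCP hypothesis to the most degenerate idempotent self-map of $\Sigma$ available, namely a constant map, for which the premise of WCP is satisfied vacuously. So first I would fix an arbitrary letter $a \in \Sigma$ and consider the constant map $h_a \colon \Sigma \to \Sigma$ defined by $h_a(b) = a$ for every $b \in \Sigma$. I would check that $h_a$ is idempotent as a self-map of $\Sigma$, which is immediate: $h_a(h_a(b)) = h_a(a) = a = h_a(b)$. By Lemma \ref{l:ref1}, $h_a$ extends uniquely to an endomorphism of $\+T$, and this extension is precisely the endomorphism $\nu_a$ collapsing every leaf label to $a$, i.e.\ the endomorphism used in the definition of the similarity relation $\sim_s$.

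Next I would instantiate Definition \ref{d:WCP} at $h = h_a$. For any two tuples $\vec u = \langle u_1, \ldots, u_n\rangle$ and $\vec v = \langle v_1, \ldots, v_n\rangle$ in $\Sigma^n$ we have $h_a(\vec u) = \langle a, \ldots, a\rangle = h_a(\vec v)$, so the hypothesis $h_a(\vec u) = h_a(\vec v)$ of WCP holds automatically. Hence WCP yields $h_a(g(\vec u)) = h_a(g(\vec v))$, that is $\nu_a(g(\vec u)) = \nu_a(g(\vec v))$. By the very definition of $\sim_s$ this says exactly that $g(\vec u)$ and $g(\vec v)$ are similar, and since $\vec u, \vec v$ were arbitrary in $\Sigma^n$ the lemma follows. (Note that, unlike the earlier results in this section, this argument does not use $|\Sigma| \geq 3$ at all.)

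There is essentially no genuine obstacle here; the only point that needs to be handled carefully is the observation that WCP should be applied to the constant idempotent so that its premise becomes vacuous, together with the identification of the endomorphic extension of that constant map with the skeleton projection $\nu_a$ that defines $\sim_s$. Once that is spotted, the proof is a single line.
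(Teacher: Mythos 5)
Your proof is correct and is essentially identical to the paper's: both apply the WCP condition to the constant (hence idempotent) map $\nu_a$, whose premise $\nu_a(\vec u)=\nu_a(\vec v)=\langle a,\ldots,a\rangle$ holds vacuously, yielding $\nu_a(g(\vec u))=\nu_a(g(\vec v))$, i.e.\ similarity. Your added remarks (checking idempotence, identifying the extension with $\nu_a$, noting that $|\Sigma|\geq 3$ is not used) are accurate but the argument is the same.
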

 
\begin{proof}
As $\nu_a(\vec u) = \nu_a(\vec v) = \langle a,\ldots, a \rangle$ for all
$\vec u, \vec v \in \Sigma^n$ and $g$ is WCP,
$\nu_a(g(\vec u)) = \nu_a(g(\vec v))$.
\end{proof}

We often use a different form of the condition WCP, which deals
only with alphabetic graftings.
 
\begin{proposition}
A function $g$ is WCP if and only if 
     
(GCP) (G for graftings) for all $a_1, a_2,\ldots, a_n \in \Sigma$,
$i \in \{1, \ldots, n\}$ and $b_i \in \Sigma$,  
$\GS{a_i}{b_i}(g(a_1, \ldots, a_n))
 = \GS{a_i}{b_i}(g(a_1, \ldots, a_{i-1}, b_i, a_{i+1}, \ldots, a_n))$.
\end{proposition}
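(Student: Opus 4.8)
The plan is to prove the equivalence WCP $\iff$ (GCP) by showing each direction separately. The forward direction, WCP $\implies$ (GCP), is the easy one: fix letters $a_1,\dots,a_n\in\Sigma$, an index $i$, and $b_i\in\Sigma$. If $a_i=b_i$ there is nothing to prove, so assume $a_i\neq b_i$. I would apply the WCP condition to the idempotent mapping $h\colon\Sigma\to\Sigma$ defined by $h(a_i)=b_i$ and $h(c)=c$ for all $c\neq a_i$ — note $h$ is idempotent precisely because $h(b_i)=b_i$ — together with the tuples $\vec u=\langle a_1,\dots,a_n\rangle$ and $\vec v=\langle a_1,\dots,a_{i-1},b_i,a_{i+1},\dots,a_n\rangle$. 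Since $h$ agrees on $\vec u$ and $\vec v$ componentwise (the only possibly differing coordinate is $i$, where $h(a_i)=b_i=h(b_i)$), WCP gives $h(g(\vec u))=h(g(\vec v))$. Finally I would observe that the homomorphic extension of this particular $h$ to $\+T\to\+T$ is exactly the alphabetic grafting $\GS{a_i}{b_i}$, which yields the displayed (GCP) equality.

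For the converse, (GCP) $\implies$ WCP, I would take an arbitrary idempotent $h\colon\Sigma\to\Sigma$ and tuples $\vec u=\langle u_1,\dots,u_n\rangle$, $\vec v=\langle v_1,\dots,v_n\rangle$ in $\Sigma^n$ with $h(\vec u)=h(\vec v)$, and aim to show $h(g(\vec u))=h(g(\vec v))$. The strategy is to pass from $\vec u$ to $\vec v$ by a finite sequence of single-coordinate changes, each of which is a substitution of a letter by its $h$-image. More precisely, any idempotent $h$ factors as a composition of elementary idempotents of the form "send one letter $a$ to a fixed point $b$ of $h$, fix everything else"; correspondingly, for each coordinate $j$ one can interpolate between $u_j$ and $v_j$ using the fact that $h(u_j)=h(v_j)$. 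The key algebraic point is that as endomorphisms of $\+T$, $h$ is a composition $\GS{c_1}{d_1}\circ\cdots\circ\GS{c_k}{d_k}$ of alphabetic graftings where each $d_\ell$ is an $h$-fixed point, so $h\circ\GS{c_\ell}{d_\ell}=h$ for each $\ell$; applying (GCP) repeatedly — once per coordinate and per elementary step — lets one replace the arguments of $g$ one at a time while keeping $h(g(\cdots))$ unchanged, until $\vec u$ has been transformed into $\vec v$.

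The main obstacle I expect is the bookkeeping in the converse direction: one must be careful that the intermediate tuples stay in $\Sigma^n$ (they do, since we only ever substitute letters for letters), that the elementary graftings used genuinely compose to $h$, and that at each step the relevant (GCP) instance applies — i.e. that after partially transforming the tuple, the coordinate being modified still has the form "some letter $a_i$, to be replaced by $b_i=h(a_i)$" with the outer grafting absorbed by $h$. A clean way to organise this is to prove first the single-coordinate statement: if $h$ is idempotent and $u_j,v_j\in\Sigma$ with $h(u_j)=h(v_j)$, then $h(g(\dots,u_j,\dots))=h(g(\dots,v_j,\dots))$ for any fixed values of the other coordinates in $\Sigma$; this reduces to the case $v_j=h(u_j)$ (a single grafting step, directly from (GCP) after composing with the rest of $h$) plus symmetry and transitivity. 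Iterating over $j=1,\dots,n$ then gives the full WCP condition.
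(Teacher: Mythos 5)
Your proof is correct and takes essentially the same route as the paper's: both directions hinge on the identity $h\circ\GS{a}{b}=h$ whenever $h(a)=h(b)$, and your coordinate-by-coordinate chain for the converse is just the paper's induction on $n$ unrolled. The detour through factoring $h$ as a composition of graftings is unnecessary (and would need $\Sigma$ finite); your final ``single-coordinate statement'' already does all the work.
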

 
\begin{proof}
Since
$\GS{a_i}{b_i}( a_1, \ldots, a_n)
 = \GS{a_i}{b_i}( a_1, \ldots, a_{i-1}, b_i, a_{i+1}, \ldots, a_n )$,
clearly WCP implies GCP. The proof of the converse is by
induction on $n$. It is obviously true for $n = 0$.

Otherwise, let $h$ be a mapping $h \colon \Sigma \to \Sigma$ and
let $\vec u, \vec v \in \S^n$ such that $h(\vec u ) = h(\vec v)$, and let
$a, b \in \S$ such that $h(a) = h(b)$. By (GCP), we have
$\GS{a}{b}(g(\vec u, a)) = \GS{a}{b}(g(\vec u, b))$, hence
$h(\GS{a}{b}(g(\vec u, a))) = h(\GS{a}{b}(g(\vec u, b)))$.

\noindent But $h( \GS{a}{b}(c)) = \left\{
  \begin{array}{ll}
  h(c) &\mbox{if $c \neq a$}\\
  h(b) = h(a)& \mbox{if $c = a$}
  \end{array}\right.$
hence $h \circ \GS{a}{b} = h$. Therefore
$h(g(\vec u, a)) = h(g(\vec u, b))$, and by the induction applied
to $g_{\ldots, b}$,
$h(g(\vec u, a)) = h(g(\vec u, b)) = h(g(\vec v, b))$.
\end{proof}

Let us first study unary WCP functions whose restriction to
$\Sigma$ takes its values in $\Sigma$. 

\begin{proposition}\label{p2Andre}
Assume $|\Sigma| \geq 3$. Let $f \colon \+T \to  \+T$ be WCP and
such that $f(\Sigma) \subseteq \Sigma$. Then $f\restrict{\Sigma}$
is (1) either a constant function (2) or the identity.
\end{proposition}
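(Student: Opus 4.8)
## Proof proposal for Proposition~\ref{p2Andre}

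The plan is to exploit the GCP condition directly on the finite alphabet $\Sigma$, viewing $f\restrict{\Sigma}$ as a self-map of $\Sigma$ that must be compatible with every alphabetic grafting $\GS{a}{b}$. Write $\phi = f\restrict{\Sigma}\colon \Sigma \to \Sigma$. The GCP condition (with $n=1$) says: for all $a, b \in \Sigma$, $\GS{a}{b}(\phi(a)) = \GS{a}{b}(\phi(b))$. Recall that $\GS{a}{b}$ acts on a letter $c$ by fixing $c$ if $c \neq a$ and sending $a$ to $b$. So the equation $\GS{a}{b}(\phi(a)) = \GS{a}{b}(\phi(b))$ becomes a constraint relating the three (not necessarily distinct) letters $\phi(a)$, $\phi(b)$, $a$, $b$: applying $\GS{a}{b}$ replaces any occurrence of $a$ by $b$ and leaves the rest alone.

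First I would record the case analysis for the GCP equation on letters. Fix $a \neq b$. If neither $\phi(a)$ nor $\phi(b)$ equals $a$, then $\GS{a}{b}$ fixes both, so the equation forces $\phi(a) = \phi(b)$. If $\phi(a) = a$ and $\phi(b) \neq a$, then the left side is $b$ and the right side is $\phi(b)$, so $\phi(b) = b$; symmetrically if $\phi(b) = a$ and $\phi(a) \neq a$ then $\phi(a) = a$. If $\phi(a) = a$ and $\phi(b) = a$, both sides equal $b$ and the equation holds with no further constraint. Summarizing: for every pair $\{a,b\}$, either $\phi(a) = \phi(b)$, or $\{\phi(a),\phi(b)\}$ "contains $a$ forcing a fixed point" as above; in all surviving cases one checks that \emph{whenever $a$ is in the image of $\phi$ restricted to $\{a,b\}$, the relevant letter is a fixed point of $\phi$}. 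The cleanest way to package this: I claim that if $\phi$ is not constant, then $\phi$ has at least two fixed points, and then that $\phi = \mathrm{id}$.

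Next I would argue as follows. Suppose $\phi$ is not constant, so there exist $a_0, a_1$ with $\phi(a_0) \neq \phi(a_1)$. Pick any $c \in \Sigma$ and any $b \neq c$; I want to show $\phi(c) = c$. Since $|\Sigma| \geq 3$, there is enough room to choose auxiliary letters. Apply the GCP equation with the pair $\{c, b\}$ for a well-chosen $b$: if we can arrange that $\phi(c) = c$ or $\phi(b) = c$ for \emph{some} $b$, the case analysis above yields a fixed point, and then using non-constancy plus a second application of GCP one propagates fixed-pointedness to all of $\Sigma$, giving $\phi = \mathrm{id}$. Concretely: from non-constancy and the case analysis, for each letter $c$ at most one "offending" value is possible, and since $\Sigma$ has $\geq 3$ elements, a counting/pigeonhole argument over all pairs forces either $\phi$ constant (excluded) or $\phi(c) = c$ for every $c$. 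The role of $|\Sigma|\ge 3$ is exactly to make this pigeonhole step go through — with only two letters, the "swap" map $a \leftrightarrow b$ is spuriously WCP.

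I expect the main obstacle to be the bookkeeping in the pigeonhole step: showing that the only way to satisfy \emph{all} the pairwise GCP constraints on a $\geq 3$-element alphabet, without being constant, is to be the identity. The delicate point is ruling out "partial" maps that fix some letters and collapse others — these are killed by picking a pair $\{a,b\}$ with $\phi(a) = a$ (a known fixed point from non-constancy) and $b$ a letter with $\phi(b) \neq b$, $\phi(b) \neq a$, which is available precisely because $|\Sigma| \geq 3$, and then reading off a contradiction from the case analysis. Once every letter is shown to be a fixed point, $\phi = \mathrm{id}$ and we are in case (2); otherwise $\phi$ is constant and we are in case (1).
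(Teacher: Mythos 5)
Your overall strategy --- read the unary GCP constraints $\GS{a}{b}(\phi(a)) = \GS{a}{b}(\phi(b))$ as pairwise conditions on the self-map $\phi = f\restrict{\Sigma}$ and deduce the constant/identity dichotomy --- is exactly the right one and is, in spirit, what the paper does. But as written the proposal has a genuine gap, and it sits precisely where you predicted (``the bookkeeping in the pigeonhole step''), which you never carry out. Worse, your case analysis contains an error that undermines the lemma you want to extract from it: in the case $\phi(b) = a$, $\phi(a) \neq a$, the equation gives $\GS{a}{b}(\phi(a)) = \phi(a)$ on the left and $\GS{a}{b}(a) = b$ on the right, so the correct conclusion is $\phi(a) = b$ --- a \emph{swap} on the pair $\{a,b\}$, not the fixed point $\phi(a) = a$ you claim ``symmetrically.'' Consequently your packaged summary (``whenever $a$ is in the image of $\phi$ restricted to $\{a,b\}$, the relevant letter is a fixed point'') is false, and the claim ``not constant $\Rightarrow$ at least two fixed points'' does not follow from the pairwise analysis alone: the swap survives every constraint involving only the two letters $a_0, a_1$, and killing it requires an explicit computation with a third letter (e.g.\ $\GS{a_0}{c}$ forces $\phi(c) = a_1$, and then $\GS{a_1}{c}$ gives $a_0 = c$, a contradiction). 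You are clearly aware the swap is the enemy --- you even note it is ``spuriously WCP'' on a two-letter alphabet --- but the proposal never exhibits the graftings that exclude it, and a pigeonhole count over pairs will not do so by itself.

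For comparison, the paper closes this in three lines by running the implication in the other direction: if $f$ is not the identity, pick $a \neq b$ with $f(a) = b$; then $\GS{a}{b}(f(b)) = \GS{a}{b}(f(a)) = b$ forces $f(b) \in \{a,b\}$; for any third letter $c$, $\GS{a}{c}(f(c)) = \GS{a}{c}(b) = b$ forces $f(c) = b$; and finally $\GS{b}{c}(f(b)) = \GS{b}{c}(f(c)) = c$ forces $f(b) \in \{b,c\}$, whence $f(b) = b$ and $f$ is the constant $b$. If you prefer your orientation (not constant $\Rightarrow$ identity), it can be completed: first rule out the swap with a third letter as above, conclude that non-constancy yields two fixed points $a, b$, and then for any $c$ the constraints $\GS{a}{c}$ and $\GS{b}{c}$ give $\phi(c) \in \{a,c\} \cap \{b,c\} = \{c\}$. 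Either way, the concrete chain of graftings is the content of the proof, and it is the part your proposal leaves blank.
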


\begin{proof}
If $f$ is not the identity there exist $a,\ b$, with $a \neq b$ and
$f(a) = b$. As $\GS{a}{b}(f(b)) = \GS{a}{b}(f(a)) = \GS{a}{b}(b) = b$,
we get $f(b) \in \{a, b\}$. 

For $c \not\in \{a, b\}$, $\GS{a}{c}(f(c)) = \GS{a}{c}(f(a)) = b$ implies
$f(c) = b$. It remains to prove that $f(b) = b$. From
$\GS{b}{c}(f(b)) = \GS{b}{c}(f(c)) = c$, we deduce that
$f(b) \in \{c, b\}$, hence $f(b) \in \{a, b\} \cap \{c, b\} = \set{b}$, which
concludes the proof.
\end{proof}

We now will generalize Proposition \ref{p2Andre} by Proposition
\ref{pr3Andre} (replacing a unary $f$ by a $n$-ary $g$). 

\begin{proposition}\label{pr3Andre}
Assume $|\Sigma| \geq 3$. If $g \colon \+T^n \to  \+T$ is WCP and
such that $g(\S^n) \subseteq \Sigma$, then $g\restrict{\Sigma^n}$
is (1) either a constant function (2) or a projection $\pi^n_i$.
\end{proposition}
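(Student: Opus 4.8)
The plan is to reduce Proposition~\ref{pr3Andre} to the unary case already settled in Proposition~\ref{p2Andre}, using the ``freezing'' technique. Fix $a \in \Sigma$ and, for each coordinate $i$, consider the unary function $f_i \colon \+T \to \+T$ obtained by plugging the variable into slot $i$ and the constant $a$ into all other slots; that is, $f_i(t) = g(a, \ldots, a, t, a, \ldots, a)$ with $t$ in position $i$. Since $g$ is WCP and evaluating at $\Sigma^n$ gives a value in $\Sigma$, each $f_i$ restricted to $\Sigma$ takes its values in $\Sigma$, and I would first check that each $f_i$ is itself WCP (a straightforward verification using the GCP form of the condition, restricting graftings to slot $i$). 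By Proposition~\ref{p2Andre}, for every $i$ the restriction $f_i\restrict{\Sigma}$ is either constant or the identity.

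\medskip

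The combinatorial heart of the argument is then to show that these $n$ dichotomies are not independent: \emph{at most one} $f_i$ can be the identity, and if exactly one is the identity, say $f_{i_0}$, then $g\restrict{\Sigma^n}$ is the projection $\pi^n_{i_0}$; whereas if all the $f_i$ are constant, then $g\restrict{\Sigma^n}$ is constant. To see the first point, suppose $f_i$ and $f_j$ are both the identity for $i \neq j$. Take letters $b \neq c$ both different from $a$. On one hand, freezing all slots other than $i$ and $j$ to $a$ gives a binary WCP function $h(x,y) = g(\ldots, x, \ldots, y, \ldots)$ with $h(a,a)=a$, $h(b,a)=f_i(b)=b$ and $h(a,c)=f_j(c)=c$. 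Applying the grafting $\GS{a}{c}$ to the GCP identity $\GS{a}{c}(h(b,a)) = \GS{a}{c}(h(b,c))$ (legitimate since the two argument tuples differ only by replacing $a$ by $c$ in slot $2$) forces $h(b,c) \in \Sigma$ to satisfy $\GS{a}{c}(h(b,c)) = \GS{a}{c}(b) = b$, hence $h(b,c) \in \{a,b\}$; symmetrically, using $\GS{a}{b}$ in slot $1$, $h(b,c) \in \{a,c\}$; and using $\GS{a}{a}$-type relations with a third letter one pins down $h(b,c)$ to a single value, contradicting the two constraints $\{a,b\}$ and $\{a,c\}$ whose intersection is $\{a\}$, while a further grafting rules out $h(b,c)=a$. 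This contradiction shows at most one $f_i$ is the identity.

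\medskip

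With that settled, I would finish as follows. If all $f_i\restrict{\Sigma}$ are constant, fix arbitrary $\vec u = \langle u_1,\ldots,u_n\rangle \in \Sigma^n$; walking from $\langle a,\ldots,a\rangle$ to $\vec u$ one coordinate at a time and applying the GCP identity at each step (each step changes one coordinate from $a$ to $u_i$, and since $f_i$ is constant the value of $g$ is unaffected by that change after applying the appropriate grafting, using that the output lies in $\Sigma$ where graftings are injective on letters other than the grafted one), one concludes $g(\vec u) = g(a,\ldots,a)$, so $g\restrict{\Sigma^n}$ is constant. If $f_{i_0}$ is the identity and all other $f_i$ are constant (necessarily equal to $a$, since $f_{i_0}(a)=a$ forces the common value on $\langle a,\ldots,a\rangle$ to be $a$), the same coordinate-by-coordinate argument shows that changing a coordinate $i \neq i_0$ does not affect the output while the $i_0$-coordinate is carried through unchanged, yielding $g(\vec u) = u_{i_0} = \pi^n_{i_0}(\vec u)$. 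I expect the main obstacle to be the two-variable case at the core of the second paragraph: getting the graftings pointed at the right coordinates and assembling the three constraints on $h(b,c)$ into a genuine contradiction requires care, but it is the natural $n=2$ analogue of the case analysis in the proof of Proposition~\ref{p2Andre}, and $|\Sigma| \geq 3$ is exactly what makes the third letter available to close it.
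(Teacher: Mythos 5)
Your reduction to the unary case via the slices $f_i(t)=g(a,\dots,a,t,a,\dots,a)$ is sound (each $f_i$ is WCP with values in $\Sigma$, so Proposition~\ref{p2Andre} applies), and your claim that at most one slice can be the identity is correct --- in fact easier than you make it: $\GS{a}{c}(h(b,c))=b$ forces $h(b,c)=b$ outright, since $\GS{a}{c}(a)=c\neq b$, and symmetrically $\GS{a}{b}(h(b,c))=c$ forces $h(b,c)=c$, an immediate contradiction. (Your constraint sets $\{a,b\}$ and $\{a,c\}$ are not the actual preimages, and no ``further grafting'' is needed.) Note that this is a genuinely different route from the paper, which inducts on the arity $n$ and uses the full $n$-ary induction hypothesis (each $g_{\cdots,a}$ is constant or a projection \emph{as an $n$-ary function}), rather than only the $n$ unary slices through the base point $(a,\dots,a)$.

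The genuine gap is in the walking argument of your final paragraph. The grafting $\GS{a}{u_i}$ applied at each step is not injective on $\Sigma$: it identifies $a$ with $u_i$. From $\GS{a}{u_i}(g(\vec w))=\GS{a}{u_i}(g(\vec w\,'))$ you may conclude $g(\vec w\,')=g(\vec w)$ only when $g(\vec w)\notin\{a,u_i\}$; otherwise $g(\vec w\,')$ is pinned down only to a two-element set. This is not a peripheral degeneracy but the generic situation in both cases you must handle: in the projection case the other slices are the constant $a$, and the step changing coordinate $i$ is ambiguous whenever $u_{i_0}=a$ or $u_i=u_{i_0}$ (already for $n=2$, $i_0=1$, the value $g(b,b)$ is not determined by a single walk from $(b,a)$); in the constant case the same failure occurs whenever the constant equals $a$ or equals the letter being introduced. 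Your parenthetical ``graftings are injective on letters other than the grafted one'' names exactly the hypothesis that fails here. The argument can be repaired --- detour through a tuple involving a third letter $d$ and intersect the two resulting two-element constraints, which is one more place where $|\Sigma|\geq 3$ is essential --- but as written the step does not go through. The paper's induction on arity avoids this entirely: knowing that $g_{\cdots,a}$ is constant on all of $\Sigma^n$ makes the ``all slices constant $\Rightarrow g$ constant'' step a one-line consequence instead of a connectivity argument.
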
 
 
\begin{proof}
The proof is by induction on $n$. By Proposition \ref{p2Andre} it is
true for $n = 1$. If $g$ is of arity $n+1$ then, by induction
hypothesis, for any $a \in \Sigma$, the function $g_{\cdots, a}$ of
arity $n$ is either a constant or a projection $\pi_i^n$. We first show
that these functions are all equal to a given $\pi_i^n$, or all equal to
a same constant, or every $g_{\cdots, a}$ is the constant function
$a$.

Let us assume that $g_{\cdots,a} = \pi_i^n$. Let
$\vec u = \pair{a, \ldots, a, c, a,\ldots, a}$ and
$\vec v = \pair{a, \ldots, a, d, a, \ldots, a}$ with $a ,c, d$ pairwise
disjoint, so that for any $b$, $\GS{a}{b}(g(\vec u, a)) = c$ and
$\GS{a}{b}(g(\vec v, a)) = d$. It follows from the GCP condition that
$\GS{a}{b}(g(\vec u, a)) = \GS{a}{b}(g(\vec u, b)) = c$ and
$\GS{a}{b}(g(\vec v, a)) = \GS{a}{b}(g(\vec v, b)) = d$,
which is impossible if $g_{\cdots, b}$ is either a constant or a
projection $\pi_j^n$ with $j \neq i$. Hence all $g_{\cdots, a}$ are
equal to $\pi_i^n$, implying $g = \pi_i^{n+1}$.
 
\smallskip
  
Assume now all the $g_{\cdots, a}$ are constant. For every
$\vec u, \vec v, a$, we have $g(\vec u, a) = g(\vec v, a)$. We
choose an arbitrary $\vec u \in \Sigma^n$ which will be fixed. By
the induction hypothesis $g_{\vec u, \cdot}$ is either (1) the
identity, or (2) a constant $c$. In case (1), for all $ \vec v, a$,
$g(\vec u, a) = g(\vec v, a) = a$ and $g = \pi_{n+1}^{n+1}$.
In case (2), for all $\vec v, a, b$,
$g(\vec u, a) = g(\vec v, b) = c$ and $g$ is a constant.
\end{proof}

As CP functions are WCP, for $g$ a CP function such that for some
$a_1, \ldots, a_n \in \Sigma $, $g(a_1, \ldots, a_n) \in \Sigma$, we
have shown that there exists a polynomial $P_g$, which is either a
constant $a \in \Sigma$ or an $x_i$, such that
$g = \widetilde {P_g}$. We will now extend to the case when
$g(a_1, \ldots, a_n) \not \in \Sigma$.

\begin{proposition}\label{prop:reduc}
Assume that $|\Sigma| \geq 3$. Let $g \colon \+T^n \to  \+T$ be
WCP. Then there exists a polynomial $P_g$ such that
$g\restrict{\Sigma^n} = \widetilde{P_g}\restrict{\Sigma^n}$.
\end{proposition}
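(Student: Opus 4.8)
The plan is to argue by induction on the common size $m$ of the trees $g(\vec u)$ as $\vec u$ ranges over $\Sigma^n$. This makes sense: since $g$ is WCP, the lemma stating that a WCP function takes pairwise similar values on $\Sigma^n$ applies, and similar trees have the same size (similarity is the kernel of the size-preserving endomorphism $\nu_a$), so $|g(\vec u)|$ does not depend on $\vec u\in\Sigma^n$. It is convenient to carry through the induction the marginally more general statement that every $g\colon\Sigma^n\to\+T$ satisfying the condition GCP agrees on $\Sigma^n$ with a polynomial function; since the restriction to $\Sigma^n$ of a WCP function satisfies GCP, this yields the proposition. For $m=0$ we have $g\equiv\bz$ and take $P_g=\bz$; for $m=1$ we have $g(\Sigma^n)\subseteq\Sigma$, so Proposition~\ref{pr3Andre} gives that $g\restrict{\Sigma^n}$ is a constant $a\in\Sigma$ or a projection $\pi^n_i$, and we take $P_g=a$ or $P_g=x_i$ accordingly.

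For $m>1$ every $g(\vec u)$ lies outside $\set\bz\cup\Sigma$, so by Lemma~\ref{uniquedecompo} it has a unique factorization $g(\vec u)=g_L(\vec u)\star g_R(\vec u)$, defining $g_L,g_R\colon\Sigma^n\to\+T$. Applying the cancellativity of $\star$ supplied by Lemma~\ref{uniquedecompo} to the identity $\nu_a(g(\vec u))=\nu_a(g(\vec v))$ shows $g_L(\vec u)\sim_s g_L(\vec v)$ and $g_R(\vec u)\sim_s g_R(\vec v)$, so $g_L$ and $g_R$ have well-defined common sizes $m_L,m_R$ with $m_L+m_R+1=m$, whence $m_L,m_R<m$. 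Likewise, for any alphabetic grafting $\GS{a_i}{b_i}$ with $b_i\in\Sigma$ --- which is size-preserving, hence never collapses a nonempty subtree to $\bz$ --- applying cancellativity of $\star$ to the GCP equality $\GS{a_i}{b_i}(g(a_1,\dots,a_n))=\GS{a_i}{b_i}(g(a_1,\dots,b_i,\dots,a_n))$ yields the corresponding equalities for $g_L$ and for $g_R$; hence $g_L$ and $g_R$ themselves satisfy GCP. By the induction hypothesis there are polynomials $P_{g_L},P_{g_R}$ with $g_L\restrict{\Sigma^n}=\widetilde{P_{g_L}}\restrict{\Sigma^n}$ and $g_R\restrict{\Sigma^n}=\widetilde{P_{g_R}}\restrict{\Sigma^n}$, and $P_g:=P_{g_L}\star P_{g_R}$ works, since $\widetilde{P_g}(\vec u)=\widetilde{P_{g_L}}(\vec u)\star\widetilde{P_{g_R}}(\vec u)=g_L(\vec u)\star g_R(\vec u)=g(\vec u)$ on $\Sigma^n$.

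I expect the crux to be the verification that the two halves $g_L,g_R$ again satisfy the grafting condition: this rests on the two simple observations that alphabetic graftings $\GS{a_i}{b_i}$ (with $b_i\in\Sigma$) preserve tree size, so a factor distinct from $\bz$ stays distinct from $\bz$, and that $\star$ is cancellative, so the single GCP identity for $g$ splits into one for $g_L$ and one for $g_R$. A secondary, bookkeeping point is that the naturally arising objects $g_L,g_R$ are only defined on $\Sigma^n$ rather than on all of $\+T^n$; this causes no trouble because GCP --- and every earlier result we invoke --- only constrains behaviour on $\Sigma^n$, which is why it is cleanest to phrase the induction for GCP functions $\Sigma^n\to\+T$.
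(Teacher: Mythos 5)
Your proof is correct and follows essentially the same route as the paper: induction on the common size of $g(\vec u)$ over $\Sigma^n$, with the base cases handled by Proposition~\ref{pr3Andre} and the inductive step splitting $g$ via unique decomposition into two factors of smaller size that inherit the congruence-preservation condition. The only (harmless) difference is that you verify the GCP condition for the factors using alphabetic graftings, whereas the paper verifies WCP directly by applying an arbitrary endomorphism $h$ to the factorization and invoking Lemma~\ref{uniquedecompo}; the two are interchangeable by the paper's WCP$\Leftrightarrow$GCP equivalence.
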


\begin{proof}
Let $\sigma(g)$ be the common size of all the
$g(\vec u), \ \vec u \in \S^n$. The proof is by induction on
$\sigma(g)$.

{\em Basis:} If $\sigma(g) = 0 $ then
$g\restrict{\Sigma^n} = \tilde P\restrict{\Sigma^n} = \bz$. If
$\sigma(g) = 1$ then $g(a_1, \ldots, a_n) \in \Sigma$ and the
result is proved in Proposition \ref{pr3Andre}.

{\em Induction:} If $\sigma(g) > 1$ there exists two functions
$g_i \colon \ct^n \to \ct$ for $i = 1, 2$ such that for all
$\vec u \in \Sigma^n $, $g(\vec u) = g_1(\vec u) \star g_2(\vec u)$,
with $|\sigma(g_i)| < |\sigma(g)|$. It remains to show that both
$g_1$ and $g_2$ are WCP. Let $\vec u, \vec v \in \S^n$ be such
that $h(\vec u) = h(\vec v)$ for some mapping $h \colon \S \to \S$.
Extend $h$ as an endomorphism $\+T \to \+T$ by Lemma
\ref{l:ref1}, then
$h(g(\vec u)) = h(g_1(\vec u) \star g_2(\vec u))
 = h(g_1(\vec u)) \star h(g_2(\vec u))$.
Similarly, $h(g(\vec v)) = h(g_1(\vec v)) \star h(g_2(\vec v))$.
As $g$ is WCP and $h(\vec u) = h(\vec v)$, we have
$h(g(\vec u)) = h(g(\vec v))$. Then by Lemma \ref{uniquedecompo}
(unique decomposition) we get $h(g_i(\vec u)) = h(g_i(\vec v))$ for
$i =1, 2$. This is true for any $h$, thus $g_1$ and $g_2$ are WCP.
By the induction hypothesis there exists $P_i $ such
$\widetilde{P_i}\restrict{\Sigma^n} = {g_i}\restrict{\Sigma^n}$, hence
${g}\restrict{\Sigma^n}
 = \widetilde{P_1}\restrict{\Sigma^n}\star \widetilde{P_2}\restrict{\Sigma^n}
 = \widetilde{P_1 \star P_2}\restrict{\Sigma^n}$.
\end{proof}

\begin{theorem}
If $f \colon \ct^n\to \ct$ is CP then there exists a polynomial $P$
such that $f = \tilde{P}$.
\end{theorem}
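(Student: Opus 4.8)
The plan is to assemble the theorem from the two main results already established, namely the rigidity statement of Proposition~\ref{andre} and the polynomial-representation-on-$\Sigma^n$ statement of Proposition~\ref{prop:reduc}, together with the elementary facts that every polynomial function is CP and every CP function is WCP.

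First I would observe that, since $f$ is CP, it is in particular WCP. Hence Proposition~\ref{prop:reduc} applies and yields a polynomial $P$ (built by induction on the common size $\sigma(f)$ of the values $f(\vec u)$, $\vec u\in\Sigma^n$) such that $f\restrict{\Sigma^n}=\widetilde P\restrict{\Sigma^n}$. Next, I would note that $\widetilde P$, being a polynomial function, is CP, so both $f$ and $\widetilde P$ are $n$-ary CP functions on $\+T$ that agree on $\Sigma^n$. Proposition~\ref{andre} then forces $f(t_1,\ldots,t_n)=\widetilde P(t_1,\ldots,t_n)$ for all $t_1,\ldots,t_n\in\+T$, i.e.\ $f=\widetilde P$, which is exactly the claim. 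This also completes the argument that $\langle\+T(\Sigma),\star\rangle$ is affine complete when $|\Sigma|\ge 3$.

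I do not expect a genuine obstacle at this final assembly step: the real work has already been done upstream. The delicate points are that Proposition~\ref{prop:reduc} requires the unique-decomposition Lemma~\ref{uniquedecompo} to split $g=g_1\star g_2$ and to verify that the factors $g_1,g_2$ remain WCP under every endomorphism $h$, and that the hypothesis $|\Sigma|\ge 3$ is essential in Propositions~\ref{p:G2}, \ref{p:tt'sim}, \ref{p2Andre} and \ref{pr3Andre} (one needs a third letter $d\notin\{a,c\}$ to separate trees via alphabetic graftings), which then propagates through Proposition~\ref{andre}. Granting those, the theorem follows immediately by the chain \emph{CP} $\Rightarrow$ \emph{WCP} $\Rightarrow$ (agreement with some $\widetilde P$ on $\Sigma^n$) $\Rightarrow$ (agreement with $\widetilde P$ on all of $\+T^n$).
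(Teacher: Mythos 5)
Your proposal is correct and follows exactly the paper's own argument: CP implies WCP, Proposition~\ref{prop:reduc} gives a polynomial $P$ with $f\restrict{\Sigma^n}=\tilde P\restrict{\Sigma^n}$, and since $\tilde P$ is CP, Proposition~\ref{andre} upgrades this agreement on $\Sigma^n$ to $f=\tilde P$ on all of $\+T^n$. No differences worth noting.
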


\begin{proof}
Since $f$ is CP, $f$ also is WCP. By the previous proposition, there
exists $P$ such that
$f\restrict{\Sigma^n} = \tilde{P}\restrict{\Sigma^n}$, and by
Proposition~\ref{andre}, $f = \tilde{P}$.
\end{proof}

\section{Conclusion}

We proved that, when $\Sigma$ has at least three letters, the
algebra  of arbitrary binary trees with leaves labeled by letters of
$\Sigma$ is an affine complete algebra (non commutative and non
associative).

\acknowledgments We thanks the referees for comments which
helped to improve the paper.

\nocite{*}
\bibliographystyle{abbrvnat}

\end{document}